\long\def\comment#1\endcomment{}
\newcommand{\bA}{\textbf{A}}
\newcommand{\MIP}{\text{MIP}}
\newcommand{\LIN}{\text{LIN-MIP}}
\newcommand{\RE}{\text{RE}}
\newcommand{\halt}{\text{Halt}}
\newcommand{\poly}{\text{poly}}
\newtheorem{prop}{Proposition}[section]
\newtheorem{thm}[prop]{Theorem}
\newtheorem{lem}[prop]{Lemma}
\newtheorem{remark}[prop]{Remark}
\newcommand{\bbZ}{{\mathbb Z}}
\newcommand{\bbI}{{\mathbb I}}
\newcommand{\bbR}{{\mathbb R}}
\newcommand{\bbN}{{\mathbb N}}
\newcommand{\bbC}{{\mathbb C}}
\newcommand{\bbE}{{\mathbb E}}
\newcommand{\calA}{{\mathcal A}}
\newcommand{\calH}{{\mathcal H}}
\newcommand{\calG}{{\mathcal G}}
\newcommand{\calF}{{\mathcal F}}
\newcommand{\calI}{{\mathcal I}}
\newcommand{\calO}{{\mathcal O}}
\newcommand{\lin}{\text{LIN}}
\newcommand{\supp}{\text{supp}}
\newtheorem{theorem}{Theorem}[section]
\newtheorem{claim}[theorem]{Claim}
\newtheorem{corollary}[theorem]{Corollary}
\newtheorem{definition}[theorem]{Definition}
\theoremstyle{remark}
\title{Approximating the quantum value of an LCS game is RE-hard}
\author{Aviv Taller and Thomas Vidick}
\address{}
\email{}
\begin{document}

\maketitle

\begin{abstract}
     We generalize H\r{a}stad's long-code test for projection games and show that it remains complete and sound against entangled provers.
     Combined with a result of Dong et al. \cite{Dong25}, which establishes that $\MIP^*=\RE$ with constant-length answers, we derive that $\LIN^*_{1-\epsilon,s}=\RE$, for some $1/2< s<1$ and for every sufficiently small $\epsilon>0$, where LIN refers to linearity (over $\mathbb{F}_2$) of the verifier predicate. Achieving the same result with $\epsilon=0$ would imply the existence of a non-hyperlinear group. 
     
\end{abstract}

\section{Introduction}

In a seminal result \cite{Hstad2001}, H\r{a}stad established that the problem of approximating the maximum winning probability of two classical provers in the nonlocal game associated to a {\em linear constraint system} (LCS) is NP-hard.
In this nonlocal game, or two-prover one-round interactive protocol, the verifier sends to one prover (Alice) a linear equation out of the system and one of the variables in that equation to the other prover (Bob).
Unaware of the question posed to the other prover, each participant is required to provide an assignment for their respective variables.
The provers, or players, win if Alice's assignment satisfies the equation and agrees with Bob's assignment for the selected variable.
H\r{a}stad's proof of hardness of approximation for the maximum winning probability in an LCS game, referred to as its {\em classical game value}, consists of three main components.
The first component involves the association of a 3SAT formula with a long-code test, as constructed in \cite{Hstad2001}, which has been demonstrated to possess (imperfect) completeness and soundness.
At the heart of the long-code test lies a distorted linear test on three bits.
This results in the formulation of a linear system, and the subsequent hardness result is derived through its combination with the other two components, specifically the PCP theorem \cite{Arora1998,Arora1998b} and Raz's parallel repetition theorem \cite{Raz98}.
It is noted that the preceding discussion pertains to classical provers; specifically, these are two randomized algorithms which are permitted to share a mutual random seed.

In this work, we establish that approximating the maximum winning probability of {\em quantum provers} in an LCS game, referred to as its {\em quantum game value}, is RE-hard.
By quantum provers, we mean provers who are allowed to perform measurements on a finite dimensional shared entangled state.
In other words, we show that $\LIN^*_{1-\epsilon,s}=\RE$ for certain $1>s> 1/2$ and for every sufficiently small $\epsilon>0$.

In order to obtain this hardness result, we needed quantum analogs for the three classical components of the proof.
The analog for the first component is provided in this work.
Namely, we show that the long-code test is complete and sound against quantum provers.
This is our main technical contribution.
In place of the PCP theorem, we employ the equivalence of the classes $\MIP^*=\RE$, as formulated by Dong et al. \cite{Dong25}.
In that work, the authors refine the reduction of the halting problem as established in \cite{Ji2021}, guaranteeing that the resulting $\MIP^*$ protocols use polynomial length questions while maintaining answers of constant length.
Finally, in lieu of the classical parallel repetition theorem, we utilize the parallel repetition theorem for entangled projection games by Dinur et al. \cite{Dinur2015}.
Loosely speaking, projection games refer to games in which the response of Bob, which leads to a win, is uniquely determined by Alice’s response.

Furthermore, a fourth component that we needed and which allowed us to combine the other three is an observation provided by \cite{CM25}.
This observation states, roughly, that the reduction of \cite{Dong25} gives rise to a reduction of the halting problem to projection {\em Boolean constraint system} games with equations of constant length.
Informally, a Boolean constraint system (BCS) is defined as any system with binary equations.
In the aforementioned reduction, YES instances are mapped to BCSs admitting a special type of perfect quantum strategies, while NO instances are sent to BCSs with quantum value bounded away from 1.
%Accordingly, we modified the long code test so that it would apply to these kinds of BCSs and their corresponding quantum strategies.

We discuss the soundness parameter that is obtained through our reduction. 
The linear system induced by the test consists exclusively of equations that involve exactly three variables. 
Since any such system admits an assignment that satisfies at least half of the equations, this implies a lower bound of $(2+1/2)/3=5/6$ on the classical winning probability in the corresponding LCS game. 
Indeed, Bob may respond using such an assignment, and in at most half of the equations, Alice needs to modify at most one variable to ensure satisfiability. 
Following the analysis in \cite{Hstad2001}, this lower bound is in fact tight for classical provers. 
In contrast, the soundness bound we obtain for entangled provers is $35/36$, representing a square loss relative to the classical threshold. We do not know whether this bound is tight. 
Determining the minimal winning probability for entangled strategies in such LCS games, and whether it coincides with or exceeds the classical value, is left for future work.

The following is a brief discussion concerning the imperfect completeness of the main result.
This property is exhibited by both the original reduction from 3SAT in \cite{Hstad2001} and our reduction from the halting problem.
In both cases, this is a direct consequence of introducing noise into the linearity test, which is a necessary component of the construction. 
However, beyond this, there are additional barriers to achieving perfect completeness using our methods.
From the computational point of view, an efficient reduction with perfect completeness for classical strategies would imply $\text{P}=\text{NP}$, which is widely believed to be false.
On top of that, there are also algebraic obstacles, which we now describe in more detail.

Each BCS game $B$ can be associated with a $*$-algebra, denoted $\calA(B)$.
Various notions of satisfiability for the game $B$, that is, different types of perfect strategies with winning probability 1, correspond to different types of $*$-representations of $\calA(B)$.
It was shown in \cite{PaddockSlofstra2023} that for LCS games, certain types of satisfiability are equivalent at the algebraic level, whereas this equivalence fails for general BCS games.
In particular, it is not possible, in general, to associate with every BCS $B$ an LCS $B'$ such that $\calA(B)$ admits a $*$-morphism into $\calA(B')$.
This fact presents a significant obstacle to reducing the known $\MIP^*$ protocols for the halting problem to LCS games with perfect completeness. 
Indeed, these protocols admit a presentation as BCS protocols.
As a result, any attempt to generically reduce such protocols to LCS protocols, in a way that induces a $*$-embedding of the associated algebras, is ruled out by the aforementioned algebraic constraint.
Further details on reductions of the halting problem to BCS protocols can be found in \cite{MastelSlofstra24,CM25,Fritz2020,PaddockSlofstra2023}.

One main motivation for pursuing perfect completeness lies in its connection to the longstanding open question concerning the existence of non-hyperlinear groups.
Roughly speaking, these groups admit no asymptotically faithful unitary action on finite-dimensional Hilbert spaces.
It is conjectured that such objects, if they exit, exhibit some exotic properties.
One way of proving their existence is by showing that $\LIN^*_{1,s}=\RE$, for some constant $1>s\geq 0$ \cite{Slofstra2019}.
For more details on non-hyperlinear groups and related topics, see \cite{Pestov2008}.

This section concludes with a review of some related work.
In \cite{CM25}, Culf and Mastel showed that for a list of NP-complete gap-problems, the associated entangled nonlocal game becomes $\MIP^*$-complete.
Their basic premise was that in the context of BCS games, synchronous strategies, which translate into approximate representations of the associated $*$-algebra, are the correct analog to a classical assignment for the constraint system.
Using this idea, that originates in \cite{MastelSlofstra24}, it was then proved that many classical reductions are still complete and sound against quantum strategies. 
In our work, we also use the same analogy, but only at the level of the strategies.
This key difference is a necessity.
Indeed, the reductions in \cite{CM25} were achieved by providing $*$-morphisms between the associated algebras, a method that in general cannot work with LCSs, as discussed above.

Another related work is of Man\u{c}inska, Spaas and Spirig \cite{MSS25}, in which it was proved that the gap-problem related to the class of independent set games is $\MIP^*$-complete.
Roughly speaking, an independent set game is a nonlocal game, in which the provers try to convince the verifier that a certain graph admits an independent set of a certain size.
We note that classically, these games are easy, that is, in P.
In that work, the authors showed that the reduction introduced in \cite{Maninska2015} is in fact sound against entangled provers.
To achieve this, they first proved a stability theorem for tracial von Neuman algebras, and then applied it on the tracial von Neuman algebra associated to an independent set game with a given synchronous strategy.
This kind of association was established for synchronous games \cite{Paulsen2016}, and to the best of our knowledge does not hold in general.
Consequently, this technique seems less suitable for cases in which the range of the reduction consists of non-synchronous games, as in our case.

We note that an earlier work by the second author~\cite{vidick2016three} already provides a reduction from \emph{three-prover} protocols with constant answer size to \emph{three-prover} protocols with linear (i.e., XOR) decision predicate, which parallels H\r{a}stad's reduction. 
The main difference with the present work is that the earlier paper requires three provers, whereas in this work we are able to carry out the reduction with two provers.
Working with only two provers adds additional complications, but it is necessary for our hardness result.
We discuss this issue in more detail at the end of Section \ref{sec:test}.
We note that~\cite{vidick2016three} was withdrawn due to an error in the underlying NP-hardness result, which does not affect the validity of the aforementioned reduction. 

Finally, after we obtained our results it was brought to our attention that a similar result was obtained independently in an unpublished work by O'Donnell and Yuen \cite{ODY}.

The remainder of the paper is organized as follows. Section \ref{sec:Preliminaries} covers preliminary concepts and known results in classical and quantum complexity. Section \ref{sec:BCS} defines Boolean and linear constraint system games and briefly describes some of their properties. Section \ref{sec:test} describes the construction of the entangled long-code test for BCS games. Section \ref{sec:main} defines the classes $\MIP^*$ and $\LIN^*$ and contains the proof of the main theorem.
Finally, Section \ref{sec:general_q.strategies} sketches how to generalize the main result from our proof for synchronous quantum strategies to general quantum strategies (cf. remark \ref{rem:quantum_vs_synch}).

%\Aviv{our soundness parameter is $1-(1-\delta)^2/36$ which seems much worst than $(1+\epsilon)/2$, but if a 3E-lin-2 instance is at most c satisfiable, than the best success prob of classical provers in the corresponding game is bounded by (2+c)/3 (I think it is optimal), and this is what we probably need to compare with}
%\tnote{This bothers me a little. Indeed the right result would be to get soundness of the form $2/3+\delta$ for this case. See another note later}
%\Aviv{I think it is actually should be something like $5/6+\delta=(2+(1+\delta')/2)/3=1-(1+\delta')/6$. If this is the case in the classical setting, then our current upper bound is a squaring of the losing rate $1-(1-\delta')^2/36$. Maybe this could hint at the reason for the difference (whether it is a necessity or a computational problem)}

\section{Preliminaries}\label{sec:Preliminaries}

\subsection{Boolean functions}
For any number $k\in\bbN$, we denote by $[k]$ the set $\{1,2,..,k\}$.
We identify subsets $\alpha\subset[k]$ with elements in $\{0,1\}^k$ in a natural way.
Given two elements $\alpha,\beta\in \{0,1\}^k$ we use the notation $\alpha\Delta\beta$ to denote their symmetric difference.
In some parts of this work we will use the multiplicative presentation of binary operations, where we identify $1$ with $0$ and 'false', and $-1$ with $1$ and 'true'.
We sometimes use the notation $\bbZ_2=\{\pm1\}$.
Given two elements of any kind $f$ and $g$, the function $\delta_{f,g}$ equals 1 if $f=g$ and 0 otherwise.

Given two finite sets $U\subset W$ and an element $x\in\{\pm1\}^W$, we denote its restriction to $U$ by $x|_U$.
Given a subset $\alpha\subset\{\pm1\}^W$, we denote by $\pi_2^U(\alpha)\subset \{\pm1\}^U$ the subset consisting of all elements $x\in\{\pm1\}^U$ such that there is an odd number of elements $y\in\alpha$ with $y|_U=x$.
Denote by $\calF_U$ the collection of functions $f:\{\pm1\}^U\rightarrow \{\pm1\}$.
Given $f\in\calF_U$, define $m(f)\in\{\pm1\}$ as follows
\begin{align*}
    m(f)=\begin{cases}
        -1 & \quad |f^{-1}(-1)|>|f^{-1}(1)| \\
        1 & \quad o/w
    \end{cases}
\end{align*}

For a set $\alpha\subset \{\pm1\}^U$, we define the function $\chi_{\alpha}:\calF_U\rightarrow \{\pm1\}$ by $\chi_{\alpha}(f)=\prod_{x\in\alpha}f(x)$.
In particular, $\chi_{\alpha}$ is linear, and for every $f\in\calF_U$, it holds that $\bbE_{\alpha}[\chi_{\alpha}(f)]=\delta_{\overline{1},f}$,
where $\overline{1}\in\{\pm1\}^U$ is the constant element with value $1$.
We define the {\em section function} $s_U:\calF_U\rightarrow \calF_U$, by $s_U(f)=f$ if $f(\overline{1})=1$ and $s_U(f)=-f$ otherwise.
It is a section in the sense that $s_U(f)=s_U(-f)\in\{f,-f\}$.
In particular, note that $m(f\cdot s_U(f))=-m(-f\cdot s_U(-f))$.

\subsection{Boolean matrices}
Let $U$ be a finite set.
The answers to the long-code test queries are interpreted as evaluations of some function $A_U:\calF_U\rightarrow \{\pm1\}$. 
Its goal is to validate the existence of an element $x\in\{\pm1\}^U$ such that $A_U(f)=f(x)$ for every $f\in\calF_U$.
In our generalized setting, each query with binary answer is associated with a {\em binary observable}.
A binary observable is a unitary involution $A\in U(\calH)$, where $\calH$ is any Hilbert space.
That is, $A^*=A$ and $A^2=\bbI_{\calH}$, where $A^*=A^{\dagger}$ is the conjugate transpose of $A$.
Following the foregoing discussion, we fix a collection $\bA:=\{A_f\}_{f\in \calF_U}$ of binary observables, all acting on the same space $\calH$, for the rest of this section.

\begin{definition}[Fourier Transform]\label{def:fourier}
    The Fourier coefficient of $\bA$ at $\alpha\subset\{\pm1\}^U$ is defined by 
    $$\hat{A}_{\alpha}:=(\bA,\chi_{\alpha}):=\underset{f\in\calF_U}{\bbE}[\chi_{\alpha}(f)A_f]\;.$$
\end{definition}

\begin{lem}[Fourier inversion formula and Parseval's identity]\label{lem:fourierprop}
    The following classical identities still hold: \begin{enumerate}
        \item Fourier inversion formula
        $$A_f:=\sum_{\alpha\subset\{\pm1\}^U}\chi_{\alpha}(f)\hat{A}_{\alpha}\;.$$
        \item Parseval's identity
        $$\sum_{\alpha}\hat{A}_{\alpha}^2=\bbI\;.$$
    \end{enumerate}
\end{lem}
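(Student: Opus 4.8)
The plan is to run the standard proof of Fourier analysis over the finite abelian group $\calF_U$, which under pointwise multiplication is isomorphic to $\bbZ_2^{\{\pm1\}^U}$, while noting that every manipulation below is $\bbR$-linear in the family $\bA=\{A_f\}_f$, so that nothing changes when the $A_f$ take operator values rather than scalar values in $\{\pm1\}$. The one computational ingredient needed is orthogonality of the characters $\{\chi_\alpha\}_{\alpha\subset\{\pm1\}^U}$. If $f,g\in\calF_U$ and $h:=fg$ denotes their pointwise product, then $\chi_\alpha(f)\chi_\alpha(g)=\prod_{x\in\alpha}f(x)g(x)=\chi_\alpha(h)$, and $h=\overline{1}$ exactly when $f=g$; hence the identity $\bbE_\alpha[\chi_\alpha(h)]=\delta_{\overline{1},h}$ recalled above gives $\bbE_\alpha[\chi_\alpha(f)\chi_\alpha(g)]=\delta_{f,g}$. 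Since the subsets $\alpha\subset\{\pm1\}^U$ are in bijection with the characters of $\calF_U$, of which there are $|\calF_U|$, this is equivalently $\sum_\alpha\chi_\alpha(f)\chi_\alpha(g)=|\calF_U|\,\delta_{f,g}$.

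Granting this, the inversion formula follows by substituting the definition of $\hat A_\alpha$ and interchanging the finite sum over $\alpha$ with the expectation over $g\in\calF_U$:
\[
\sum_{\alpha\subset\{\pm1\}^U}\chi_\alpha(f)\hat A_\alpha
=\sum_{\alpha}\chi_\alpha(f)\,\bbE_{g}[\chi_\alpha(g)A_g]
=\bbE_{g}\Big[\Big(\textstyle\sum_\alpha\chi_\alpha(f)\chi_\alpha(g)\Big)A_g\Big]
=\bbE_{g}\big[|\calF_U|\,\delta_{f,g}A_g\big]
=A_f .
\]
For Parseval's identity, first observe that $\hat A_\alpha$ is self-adjoint, since $\chi_\alpha(f)\in\{\pm1\}$ is real and $A_f^*=A_f$; in particular $\hat A_\alpha^2=\hat A_\alpha^*\hat A_\alpha$ is positive semidefinite, which is the feature making the statement a genuine analog of $\sum_\alpha|\hat f(\alpha)|^2=\|f\|^2$. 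Expanding both copies of $\hat A_\alpha$ and again pushing the sum over $\alpha$ through the (now double) expectation,
\[
\sum_{\alpha}\hat A_\alpha^2
=\bbE_{f,g}\Big[\Big(\textstyle\sum_\alpha\chi_\alpha(f)\chi_\alpha(g)\Big)A_fA_g\Big]
=\bbE_{f,g}\big[|\calF_U|\,\delta_{f,g}A_fA_g\big]
=\bbE_f\big[A_f^2\big]
=\bbI_{\calH},
\]
where the last step invokes the defining property of a binary observable, $A_f^2=\bbI_{\calH}$.

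I expect no genuine obstacle; the argument is essentially bookkeeping. The two points deserving a moment's care are (i) correctly tracking the normalization factor $|\calF_U|$ that relates the uniform average $\bbE_\alpha$ in the identity $\bbE_\alpha[\chi_\alpha(h)]=\delta_{\overline{1},h}$ to the bare sum $\sum_\alpha$ appearing in the inversion formula, and (ii) remembering that Parseval's identity closes not by any Fourier-theoretic input but by the involution property $A_f^2=\bbI_{\calH}$. Note also that the noncommutativity of the $A_f$ never intervenes, since the Kronecker delta $\delta_{f,g}$ forces $f=g$ before any product $A_fA_g$ with $f\neq g$ can contribute.
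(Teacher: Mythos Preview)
Your proof is correct and follows essentially the same route as the paper: both arguments substitute the definition of $\hat A_\alpha$, interchange finite sums, and collapse via character orthogonality $\sum_\alpha\chi_\alpha(f)\chi_\alpha(g)=|\calF_U|\,\delta_{f,g}$, then close Parseval with $A_f^2=\bbI$. The only cosmetic difference is that for Parseval the paper expands one factor and then invokes the just-proven inversion formula $\sum_\alpha\chi_\alpha(f)\hat A_\alpha=A_f$, whereas you expand both factors and apply orthogonality directly; these are the same computation.
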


\begin{proof}
    We begin with the inversion formula:
    \begin{align*}
        \sum_{\alpha\subset\{\pm1\}^U}\chi_{\alpha}(f)\hat{A}_{\alpha}&= \sum_{\alpha\subset\{\pm1\}^U}\chi_{\alpha}(f)\underset{f'\in\calF_U}{\bbE}[\chi_{\alpha}(f')A_{f'}]\\
        &= \underset{f'\in\calF_U}{\bbE}[A_{f'}\sum_{\alpha\subset\{\pm1\}^U}\chi_{\alpha}(f+f')]\\
        &=\underset{f'\in\calF_U}{\bbE}[A_{f'}2^{2^{|U|}}\delta_{f,f'}]=A_f\;.
    \end{align*}

    In addition,
    \begin{align*}
        \sum_{\alpha}\hat{A}_{\alpha}^2 &=\underset{f\in\calF_U}{\bbE}[A_f\sum_{\alpha}\chi_{\alpha}(f)\hat{A}_{\alpha}]\\
        &=\underset{f\in\calF_U}{\bbE}[A_f^2]\\
        &=\underset{f\in\calF_U}{\bbE}[\bbI]=\bbI\;.
    \end{align*}
\end{proof}

\begin{definition}[Folding over true]\label{def:folding}
    Given a collection of binary observables $\bA$ as above and $f\in\calF_U$, we define $A_{true,f}:=m(f\cdot s_U(f))A_{s_U(f)}$.
\end{definition}

\begin{lem}\label{lem:folding}
   The following holds:
   \begin{enumerate}
       \item $A_{true,f}=-A_{true,-f}$
       \item If $\alpha\subset \{\pm1\}^U$ is such that $|\alpha|$ is even, then $\hat{A}_{true,\alpha}=0$
   \end{enumerate}
\end{lem}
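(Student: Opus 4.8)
The plan is to unwind the definition of folding and use the elementary properties of the section function $s_U$ and of $m$ established in the Preliminaries, together with the Fourier transform from Definition \ref{def:fourier}. For part (1), I would start from $A_{true,-f} = m((-f)\cdot s_U(-f))\,A_{s_U(-f)}$. Since $s_U$ is a section, $s_U(-f) = s_U(f)$, so the observable factor $A_{s_U(-f)} = A_{s_U(f)}$ is the same as in $A_{true,f}$. It therefore suffices to compare the scalar signs $m(f\cdot s_U(f))$ and $m((-f)\cdot s_U(-f)) = m(-f\cdot s_U(f))$. But the Preliminaries record exactly the identity $m(f\cdot s_U(f)) = -m(-f\cdot s_U(-f))$, so we get $A_{true,-f} = -m(f\cdot s_U(f))\,A_{s_U(f)} = -A_{true,f}$, which is (1). (If one wants to be fully self-contained, this sign identity is itself immediate: writing $g = s_U(f)$, the functions $fg$ and $(-f)g = -(fg)$ are negatives of each other, so one has strictly more $-1$'s than $+1$'s iff the other has strictly fewer, and in the boundary/tie case both take value $+1$ by the $o/w$ branch — wait, that would make them equal, not opposite; the point is that $fg$ and $-fg$ cannot both be in the tie case unless... actually $f\cdot s_U(f)$ always satisfies $(f\cdot s_U(f))(\overline 1) = f(\overline 1)\cdot s_U(f)(\overline 1) = f(\overline 1)^2\cdot(\pm 1)$ which is $1$ when $f(\overline 1)=1$, so it evaluates to $+1$ at $\overline 1$; the genuine content is just that $m$ flips sign under negation of its argument *away from ties*, and folding is designed so the tie never matters — I would simply cite the stated identity rather than belabor this.)

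For part (2), I would compute the Fourier coefficient directly. By definition,
\[
\hat A_{true,\alpha} = \bbE_{f\in\calF_U}\big[\chi_\alpha(f)\,A_{true,f}\big] = \bbE_{f\in\calF_U}\big[\chi_\alpha(f)\,m(f\cdot s_U(f))\,A_{s_U(f)}\big].
\]
Now apply the substitution $f \mapsto -f$, which is a bijection of $\calF_U$ and hence leaves the expectation unchanged. Under it, $\chi_\alpha(-f) = (-1)^{|\alpha|}\chi_\alpha(f)$, while $m((-f)\cdot s_U(-f))\,A_{s_U(-f)} = -m(f\cdot s_U(f))\,A_{s_U(f)}$ is exactly the $-A_{true,-f} = A_{true,f}$ rewriting from part (1). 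Hence the summand picks up a factor $(-1)^{|\alpha|}\cdot(-1) = (-1)^{|\alpha|+1}$. When $|\alpha|$ is even this factor is $-1$, so $\hat A_{true,\alpha} = -\hat A_{true,\alpha}$, forcing $\hat A_{true,\alpha} = 0$. (Here I am using that the ambient space is over $\bbC$, so $2\hat A_{true,\alpha} = 0$ implies $\hat A_{true,\alpha}=0$ as an operator.) Essentially part (2) is just "$A_{true,\cdot}$ is an odd function of $f$, so its Fourier support is concentrated on odd-size sets," and part (1) is the precise statement of that oddness.

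I do not expect a genuine obstacle here; both parts are short symmetry arguments. The only point requiring a little care is making sure the section identity $m(f\cdot s_U(f)) = -m(-f\cdot s_U(-f))$ from the Preliminaries is applied correctly, in particular that the potential tie-breaking asymmetry in the definition of $m$ (the $o/w$ clause favoring $+1$) does not spoil the sign flip — this is precisely why folding is defined through $s_U(f)$ rather than through $f$ itself, and it is the reason the identity was isolated in the Preliminaries. Once that is granted, part (1) is one line and part (2) is the standard change-of-variables $f\mapsto -f$ in the Fourier average.
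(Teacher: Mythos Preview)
Your proposal is correct and follows essentially the same approach as the paper's proof: part (1) is deduced from the section identity $s_U(-f)=s_U(f)$ together with $m(f\cdot s_U(f))=-m(-f\cdot s_U(-f))$, and part (2) follows by the change of variables $f\mapsto -f$ in the Fourier average, using that $\chi_\alpha(-f)=\chi_\alpha(f)$ for $|\alpha|$ even and then invoking part (1). Your parenthetical worry about ties is moot, since $f\cdot s_U(f)$ is always the constant function $\pm 1$ (it equals $f\cdot f$ or $f\cdot(-f)$), so $m$ never sees a tie here; but in any case citing the stated identity, as you do, is exactly what the paper intends.
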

\begin{proof}
    The first assertion is immediate.
    For the second part, note that it is implied by the fact that for $|\alpha|$ even, $\chi_{\alpha}(f)=\chi_{\alpha}(-f)$.
    Thus, together with the first part and the definition of $\hat{A}_{true,\alpha}$, the assertion follows.
\end{proof}

\begin{definition}[Conditioning upon a function]\label{def:conditioning}
    Given $\bA$ as above and $C\in\calF_U$, we define for $f\in\calF_U$, $A_{C,f}:=A_{f\wedge C}$.
\end{definition}

\begin{lem}\label{lem:conditioning}
    Let $\alpha\subset \{\pm1\}^U$ be such that there exists $x\in \alpha$ with $C(x)=1$,\footnote{Recall that $1$ is associated with the Boolean value \emph{false}.} then $\hat{A}_{C,\alpha}=0$.
\end{lem}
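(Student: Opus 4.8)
The plan is to reduce Lemma~\ref{lem:conditioning} to a symmetry argument analogous to the one used for the folding lemma. The key observation is that the operation $f\mapsto f\wedge C$ is not injective on $\calF_U$, but it has controlled fibers: two functions $f,f'$ satisfy $f\wedge C=f'\wedge C$ precisely when $f$ and $f'$ agree on the set $\{x: C(x)=-1\}$ (recall $-1$ is \emph{true}), i.e. on $C^{-1}(-1)$. In particular, for any fixed point $x_0\in\{\pm1\}^U$ with $C(x_0)=1$, flipping the value of $f$ at $x_0$ leaves $f\wedge C$ unchanged, hence leaves $A_{C,f}=A_{f\wedge C}$ unchanged. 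The plan is to exploit this single-coordinate flip as an involution $\sigma_{x_0}:\calF_U\to\calF_U$ that fixes $A_{C,f}$ but toggles the sign of $\chi_\alpha(f)$ whenever $x_0\in\alpha$.

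First I would fix $x_0\in\alpha$ with $C(x_0)=1$, which exists by hypothesis, and let $\sigma=\sigma_{x_0}$ be the involution on $\calF_U$ sending $f$ to the function that agrees with $f$ everywhere except at $x_0$, where it is negated. Since $\sigma$ is a bijection on $\calF_U$, the expectation defining $\hat A_{C,\alpha}$ can be reindexed by $f\mapsto\sigma(f)$:
\begin{align*}
\hat A_{C,\alpha}=\underset{f\in\calF_U}{\bbE}[\chi_\alpha(f)A_{C,f}]=\underset{f\in\calF_U}{\bbE}[\chi_\alpha(\sigma(f))A_{C,\sigma(f)}].
\end{align*}
Then I would verify the two invariance/equivariance claims: (i) $A_{C,\sigma(f)}=A_{C,f}$, because $\sigma(f)$ and $f$ agree on $C^{-1}(-1)$ and $(\sigma(f))\wedge C$ depends only on the restriction of $\sigma(f)$ to $C^{-1}(-1)$; here one uses that $f\wedge C$ evaluated at a point $y$ equals $f(y)$ if $C(y)=-1$ (\emph{true}) and equals $1$ (\emph{false}) if $C(y)=1$, so changing $f$ only at the point $x_0\in C^{-1}(1)$ has no effect. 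And (ii) $\chi_\alpha(\sigma(f))=\prod_{x\in\alpha}(\sigma(f))(x)=-\prod_{x\in\alpha}f(x)=-\chi_\alpha(f)$, since $x_0\in\alpha$ and exactly that one factor changes sign. Combining (i) and (ii) gives $\hat A_{C,\alpha}=\bbE[-\chi_\alpha(f)A_{C,f}]=-\hat A_{C,\alpha}$, hence $\hat A_{C,\alpha}=0$.

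The main thing to get right — the only place the argument could slip — is the precise semantics of $f\wedge C$ under the multiplicative/Boolean identification ($1\leftrightarrow$ \emph{false}, $-1\leftrightarrow$ \emph{true}), and correspondingly which set ($C^{-1}(1)$ versus $C^{-1}(-1)$) the value of $f\wedge C$ is insensitive to. Once the convention is pinned down, the claim "$f\wedge C$ is unchanged by modifying $f$ at a point where $C$ is \emph{false}" is immediate, and the rest is the standard sign-flip averaging trick already deployed in Lemma~\ref{lem:folding}. No subtlety arises from the operators being noncommuting, since each $A_{C,f}$ is treated as a single opaque observable and we never multiply two of them.
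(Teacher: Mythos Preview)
Your proposal is correct and follows essentially the same approach as the paper: fix $x_0\in\alpha$ with $C(x_0)=1$, use the single-coordinate flip $f\mapsto f'$ at $x_0$ to observe that $f\wedge C=f'\wedge C$ (so $A_{C,f}=A_{C,f'}$) while $\chi_\alpha(f)=-\chi_\alpha(f')$, and conclude that the expectation defining $\hat A_{C,\alpha}$ vanishes. Your write-up is somewhat more explicit about the Boolean conventions, but the argument is identical.
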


\begin{proof}
    Let $x_0\in \alpha$ be such that $C(x_0)=1$.
    For $f\in\calF_U$, define $f'$ by $f'(x)=-f(x)$ for $x=x_0$ and $f'(x)=f(x)$ otherwise.
    Note that $f\wedge C=f'\wedge C$, and therefore, $A_{C,f}=A_{C,f'}$.
    On the other hand, $\chi_{\alpha}(f)=-\chi_{\alpha}(f')$. 
    The claim follows from the definition of $\hat{A}_{C,\alpha}$.
\end{proof}

Given $C\in\calF_U$, which is not the constant $1$ function, that is, the empty set, we can fold over true and condition upon $C$ simultaneously.
Given a pair of the form $g\wedge C$ and $(-g)\wedge C$, we choose one of them, denoting it by $s_{g,C}$.
Define $m_{f,C}=1$ if $s_{f,C}=f\wedge C$ and $-1$ otherwise.
Then we define $A_{true,C,f}=m_{f,c}A_{s_{f,C}}$.
Note that the definition of $A_{true,C,f}$ only depends on $f\wedge C$ and $A_{true,C,f}=-A_{true,C,-f}$.

\subsection{Measurements and general results in linear algebra}
A {\em positive operator-valued measure} (POVM) on a Hilbert space $\calH$, is a collection of positive semi-definite operators $\{A_{\alpha}\}_{\alpha\in I}$ on $\calH$ such that $\sum_{\alpha}A_{\alpha}=\bbI_{\calH}$.
A {\em projection-valued measure} (PVM) is a POVM $\{A_{\alpha}\}_{\alpha\in I}$, in which $A_\alpha$ is an orthogonal projection for every $\alpha$. 
A {\em state} is a unit vector in the Hilbert space $\calH$.
We use the bra-ket notation for vectors; thus, while $\ket{\psi}\in\calH$ denotes a vector, $\bra{\psi}=(\ket{\psi})^{\dagger}$ denotes its Hermitian adjoint.

Recall the {\em normalized Hilbert-Schmidt} inner product of two complex square matrices of dimension $d$, $\langle A,B\rangle:=\frac{1}{d}\Tr(A^{\dagger}B)$.
We denote the corresponding norm by $\|A\|_{hs}:=\sqrt{\langle A,A\rangle}$.

\begin{lem}\label{lem:general1}
    Let $Y_1,Y_2,Y_3,X_1,X_2$ and $X_3$ be three binary observables in $U(d)$.
    Then,
    $$|\frac{1}{d}\Tr( Y_1Y_2Y_3-X_1X_2X_3)|\leq (6(3-\sum_{l}\frac{1}{d}\Tr(Y_lX_l)))^{\frac{1}{2}}\;.$$
\end{lem}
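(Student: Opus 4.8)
The plan is to bound the difference $Y_1Y_2Y_3 - X_1X_2X_3$ by a telescoping argument, replacing one $Y_l$ by $X_l$ at a time, and to control each swap using the operator-norm bound that comes from the given inner-product data. First I would write the telescoping identity
\begin{align*}
Y_1Y_2Y_3 - X_1X_2X_3 = (Y_1-X_1)Y_2Y_3 + X_1(Y_2-X_2)Y_3 + X_1X_2(Y_3-X_3),
\end{align*}
so that after applying $\frac1d\Tr(\cdot)$ and the triangle inequality, and using that each $X_i,Y_j$ is unitary (hence $\frac1d|\Tr(MU)|\le \frac1d\|M\|_1\le \|M\|_{hs}$ for unitary $U$, by Cauchy--Schwarz in the Hilbert--Schmidt inner product), I get
\begin{align*}
\Bigl|\tfrac1d\Tr(Y_1Y_2Y_3 - X_1X_2X_3)\Bigr| \le \sum_{l=1}^3 \|Y_l - X_l\|_{hs}.
\end{align*}

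Next I would compute $\|Y_l-X_l\|_{hs}^2 = \frac1d\Tr\bigl((Y_l-X_l)^\dagger(Y_l-X_l)\bigr) = \frac1d\Tr(Y_l^2 - Y_lX_l - X_lY_l + X_l^2)$. Since each observable is a binary observable, $Y_l^2 = X_l^2 = \bbI$, and since $Y_l,X_l$ are Hermitian, $\frac1d\Tr(X_lY_l) = \overline{\frac1d\Tr(Y_lX_l)} = \frac1d\Tr(Y_lX_l)$ is real; thus $\|Y_l - X_l\|_{hs}^2 = 2 - \frac2d\Tr(Y_lX_l) = 2\bigl(1 - \frac1d\Tr(Y_lX_l)\bigr)$. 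Writing $c_l := 1 - \frac1d\Tr(Y_lX_l) \ge 0$, the bound so far reads $\bigl|\frac1d\Tr(Y_1Y_2Y_3 - X_1X_2X_3)\bigr| \le \sqrt2\sum_l \sqrt{c_l}$, while the target is $\bigl(6\sum_l c_l\bigr)^{1/2}$.

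Finally I would close the gap with Cauchy--Schwarz on the three-term sum: $\bigl(\sum_{l=1}^3 \sqrt{c_l}\bigr)^2 \le 3\sum_{l=1}^3 c_l$, hence $\sqrt2 \sum_l\sqrt{c_l} \le \sqrt2\cdot\sqrt{3\sum_l c_l} = \sqrt{6\sum_l c_l}$, which is exactly $\bigl(6(3 - \sum_l \frac1d\Tr(Y_lX_l))\bigr)^{1/2}$ since $\sum_l c_l = 3 - \sum_l \frac1d\Tr(Y_lX_l)$. I do not anticipate a serious obstacle here; the only point requiring a little care is the step $\frac1d|\Tr(MU)| \le \|M\|_{hs}$ for unitary $U$, which follows from $\frac1d|\Tr(M U)| = |\langle M^\dagger, U\rangle| \le \|M\|_{hs}\|U\|_{hs} = \|M\|_{hs}$, together with keeping track that the cross-terms $\frac1d\Tr(Y_lX_l)$ are genuinely real so that the expansion of $\|Y_l-X_l\|_{hs}^2$ is valid. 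If one wanted a cleaner constant one could instead bound $\frac1d|\Tr(M U V W)|$ directly, but the telescoping-plus-Cauchy--Schwarz route above already yields the stated constant $6$.
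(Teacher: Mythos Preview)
Your proposal is correct and follows essentially the same approach as the paper: the same telescoping decomposition, the same Cauchy--Schwarz bound $\frac{1}{d}|\Tr(MU)|\le\|M\|_{hs}\|U\|_{hs}=\|M\|_{hs}$ for unitary $U$, the same computation $\|Y_l-X_l\|_{hs}^2=2(1-\frac{1}{d}\Tr(Y_lX_l))$, and the same final $\ell_1$--$\ell_2$ inequality in $\bbC^3$ (which you phrase as Cauchy--Schwarz on the three-term sum). Your extra remark that $\frac{1}{d}\Tr(Y_lX_l)$ is real because both observables are Hermitian is a point the paper leaves implicit.
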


\begin{proof}
    We note that for a binary observable $U$, $U^{\dagger}=U$. 
    Thus,
    \begin{align*}
        (*)\ |\frac{1}{d}\Tr( Y_1Y_2Y_3-X_1X_2X_3)|&\leq |\frac{1}{d}\Tr((Y_1-X_1)Y_2Y_3)|+|\frac{1}{d}\Tr(X_1(Y_2-X_2)Y_3)|\\
        &+|\frac{1}{d}\Tr(X_1X_2(Y_3-X_3))|\\ &\underset{CS+trace}{\leq}\|Y_1-X_1\|_{hs}\|Y_2Y_3\|_{hs}+\|Y_2-X_2\|_{hs}\|Y_3X_1\|_{hs}\\
        &\ \ \ \ \ \ + \|Y_3-X_3\|_{hs}\|X_1X_2\|_{hs}\;.
    \end{align*}
    Now, note that for an unitary $U$, $\|U\|_{hs}=1$, and for two binary observables $X $ and $Y$ we have $\|Y-X\|_{hs}^2=2(1-\frac{1}{d}\Tr(YX))$.
    Therefore,
    \begin{align*}
        (*) &\leq \sqrt{2}(\sum_l(1-\frac{1}{d}\Tr(Y_lX_l))^{\frac{1}{2}})\;.
    \end{align*}
    Finally, in $\bbC^3$, we have $\|x\|_1\leq \sqrt{3}\|x\|_2$, thus
    $$(*)\leq (6\sum_l1-\frac{1}{d}\Tr(Y_lX_l))^{\frac{1}{2}}\;,$$
    as needed.
\end{proof}

We recall the following version of the Cauchy–Schwarz inequality.
\begin{lem}[Cauchy–Schwarz inequality]\label{lem:CS}
    Let $d\in \bbN$ and let $\{A_{\beta}\}_{\beta\in I}$ and $\{B_{\beta}\}_{\beta\in I}$ be two collections of matrices in $M_d(\bbC)$.
    Then,
    $$|\sum_{\beta}\langle A_{\beta},B_{\beta}\rangle|\leq (\sum_{\beta}\langle A_{\beta},A_{\beta}\rangle)^{\frac{1}{2}}(\sum_{\beta}\langle B_{\beta},B_{\beta}\rangle)^{\frac{1}{2}}\;.$$
\end{lem}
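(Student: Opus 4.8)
The plan is to read the left-hand side as a single inner product in an appropriate Hilbert space and then quote the ordinary Cauchy--Schwarz inequality. Let $I$ be the common index set of the two collections, and form the direct sum $V:=\bigoplus_{\beta\in I}M_d(\bbC)$ equipped with the pairing $\langle(A_\beta)_\beta,(B_\beta)_\beta\rangle_V:=\sum_{\beta}\langle A_\beta,B_\beta\rangle=\sum_{\beta}\tfrac1d\Tr(A_\beta^{\dagger}B_\beta)$, i.e.\ the summand-wise normalized Hilbert--Schmidt inner product introduced above.

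First I would check that $\langle\cdot,\cdot\rangle_V$ is a genuine (sesquilinear, conjugate-symmetric, positive definite) inner product on $V$. Sesquilinearity and conjugate symmetry are inherited coordinatewise from $M_d(\bbC)$; positive definiteness holds because $\langle A,A\rangle=\tfrac1d\Tr(A^{\dagger}A)=\tfrac1d\sum_{i,j}|A_{ij}|^2\ge 0$ vanishes only for $A=0$, and a sum of nonnegative terms over $\beta$ preserves this. (When $I$ is infinite one restricts to the subspace of square-summable families; in every application in this paper $I$ is finite, so no such care is needed.) With this in hand the asserted inequality is precisely $|\langle u,v\rangle_V|\le\|u\|_V\,\|v\|_V$ for $u=(A_\beta)_\beta$ and $v=(B_\beta)_\beta$, since $\|u\|_V^2=\sum_\beta\langle A_\beta,A_\beta\rangle$ and $\|v\|_V^2=\sum_\beta\langle B_\beta,B_\beta\rangle$.

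If one prefers a self-contained derivation, the standard one-parameter argument works verbatim: the case $v=0$ is trivial, and otherwise one sets $\lambda:=\langle v,u\rangle_V/\|v\|_V^2$ and expands $0\le\|u-\lambda v\|_V^2=\|u\|_V^2-|\langle u,v\rangle_V|^2/\|v\|_V^2$, which rearranges to the claim. I expect no real obstacle here; the only points worth a remark are that the normalized Hilbert--Schmidt inner product is complex valued, so one must use the complex form of Cauchy--Schwarz and keep the absolute value on the left-hand side, and that the normalization factor $\tfrac1d$ is harmless since $d\ge 1$.
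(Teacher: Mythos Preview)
Your proof is correct. The paper does not actually prove this lemma at all; it is merely \emph{recalled} as a standard fact (``We recall the following version of the Cauchy--Schwarz inequality''), so your direct-sum argument is already more than what the paper supplies.
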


\subsection{Nonlocal games}\label{sec:nonlocalgames}
A {\em nonlocal game} is a tuple $\calG = (\calI,\calO,\pi,D)$, where $\calI$ and $\calO$ are finite sets, $\pi$ is a probability distribution on the set $\calI^2$ and $D$ is a predicate $\calI^2\times\calO^2\rightarrow\{0,1\}$.
The game $\calG$ is called synchronous if in addition $\pi$ is symmetric and $D$ satisfies the synchronous condition $D(x,x,a,b)=0$ for every $x\in \calI$ and $a\neq b\in\calO$.

We also allow replacing $\calO$ by a collection of finite sets $\{\calO_i\}_{i\in\calI}$.
In this case, we add the condition $D(i_1,i_2,a_1,a_2)=0$ whenever $(a_1,a_2)\notin\calO_{i_1}\times \calO_{i_2}$.
%The game $\calG$ is called $\alpha${\em-synchronous}, if in addition, for every $x\in \calI$, $\pi(x,x)\geq \alpha\sum_{y\in\calI}\pi(x,y)$.

The above definition can be interpreted as a two-player, one-round game with a referee.
In this game, the referee samples a pair of questions $x,y\sim \pi$ and sends each question to a different player (traditionally called Alice and Bob).
The players then need to respond with a pair of answers $(a,b)\in \calO$.
They do so without communicating with each other (and in particular without knowing what the other player received).
Finally, the referee evaluates $D(x,y,a,b)$ and declares that the players have won if and only if it equals $1$.

To improve their chances of winning, the players can agree in advance on a shared strategy.
A {\em quantum strategy} consists of two finite-dimensional Hilbert spaces $\calH_A$ and $\calH_B$, a state $\ket{\psi}\in \calH_A\otimes\calH_B$ and collections of POVM's $\{A^x_a\}_{a\in\calO}$ and $\{B^x_a\}_{b\in\calO}$ for every $x\in\calI$.
A quantum strategy is called {\em synchronous} if $\calH=\calH_A=\calH_B$, $\ket{\psi}$ is the maximally entangled state $\ket{\psi_{MES}}:=\frac{1}{\sqrt{d}}\sum_{i\in[d]}\ket{i}\ket{i}$ for some orthonormal basis of $\calH$, $\{A^x_a\}_a$ is a PVM and $(A^x_a)^T=B^x_a$ for every $x\in\calI$ and $a\in\calO$.
A synchronous strategy is called {\em oracularizable} if $A_a^xA^y_b=A^y_bA^x_a$ for every $(x,y)\in\supp(\pi)$ and $a,b\in\calO $, where for a probability measure $\mu$ on a finite set $U$, we define $\supp(\mu)$ to be the subset of elements $x\in U$ such that $\mu(x)>0$.

Every strategy gives rise to a {\em quantum correlation}, that is, a collection of probability distributions that takes the following form
$$(*)\ p(a,b|x,y)=\bra{\psi}A^x_a\otimes B^y_b\ket{\psi}\;.$$

If $\ket{\psi}=\ket{\psi_{MES}}$, then we have the identity $\bra{\psi}A\otimes B\ket{\psi}=\frac{1}{d}\Tr(AB^T)$.
So, synchronous correlations can be further presented as
$$ p(a,b|x,y)=\frac{1}{d}\Tr(A^x_aA^y_b)=\langle A^x_a,A^y_b\rangle$$
for every $x,y\in \calI$ and $a,b\in\calO$.
We will refer to strategies as correlations and vice versa, and treat them as synonyms.

The interpretation of a strategy is as follows.
Each of the players has their own system $\calH_A$ and $\calH_B$, and before the game starts, they prepare the state $\ket{\psi}$ in their combined system.
In the game, upon receiving the question $x$, Alice measures her system using the POVM $\{A_a^x\}_{a\in\calO}$ and responds according to the measurement result. 
Upon receiving the question $y$, Bob will act similarly with his measurements.
According to the laws of quantum mechanics, the probability that Alice's answer will be $a$ and Bob's answer will be $b$ is exactly given by $(*)$.

The {\em winning probability} of a correlation $p$ is defined by
$$\omega(\calG,p):=\underset{x,y\sim \pi}{\bbE}\bigg[\sum_{a,b\in\calO}p(a,b|x,y)D(x,y,a,b)\bigg].$$

The {\em bias} of $p$ in the game $\calG$ is defined as $\beta(\calG,p)=2\omega(\calG,p)-1$.
A strategy $p$ is called {\em perfect} if $\omega(\calG,p)=1$.
The {\em (quantum) game value} of $\calG$ is then defined as 
$$\omega_q(\calG)=\sup_p\omega(\calG,p)\;,$$
where the supremum is over all possible quantum strategies.
Finally, we define the {\em bias} of $\calG$ as $\beta_q(\calG)=2\omega_q(\calG)-1$.
The {\em synchronous (quantum) game value} of $\calG$, denoted by $\omega_q^s(\calG)$, is almost the same, except that the supremum is taken over all possible synchronous correlations.
The synchronous bias $\beta_q^s(\calG)$ is defined analogously.
Clearly, for every nonlocal game $\omega_q^s(\calG)\leq \omega_q(\calG)$.

The main example of nonlocal games that we will be concerned with are {\em Boolean constraint system} games, which will be described in detail in Section \ref{sec:BCS}.

\subsection{Projection games and Parallel repetition }\label{sec:parallelrepetition}
A nonlocal game $\calG=(\calI,\calO,\pi,D)$ is called {\em projection}, if for every $(x,y)\in\supp(\pi)$ and $a\in\calO$, there exists at most one $b\in\calO$ such that $D(x,y,a,b)=1$.

For a positive integer $u$, the $u$ {\em fold repetition} of the nonlocal game $\calG$ is the game $\calG^{\otimes u}:=(\calI^{u},\calO^u,\pi^{\otimes u},D^{\otimes u})$.
Here we denote by $\pi^{\otimes u}$, the probability measure on $\calI^{u}\times\calI^u$, which is defined by $\pi^{\otimes}(x_1,..,x_u,y_1,..,y_u)=\prod_{l\in[u]}\pi(x_l,y_l)$.
The predicate $D^{\otimes u}$ is defined analogously.
We have the following parallel repetition theorem for the class of projection games.

\begin{theorem}[{\cite[Theorem 1.1]{Dinur2015}}]\label{thm:parallelprojection}
    There exist constants $C,c>0$ such that the following holds.
    For any projection game $\calG$,
    $$\omega_q(\calG^{\otimes u})\leq (1-C(1-\omega_q(\calG))^c)^{u/2}.$$
\end{theorem}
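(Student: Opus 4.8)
The plan is to follow the information-theoretic route to parallel repetition, in the entangled-prover adaptation due to Dinur, Steurer, and Vidick; since the statement here is a verbatim quotation of \cite[Theorem 1.1]{Dinur2015}, in the body of the paper one simply cites it, but the argument I would give is the following. First, fix a quantum strategy for $\calG^{\otimes u}$ --- a shared state $\ket{\psi}\in\calH_A\otimes\calH_B$ together with (projective, without loss of generality) measurement families $\{A^{\bar x}_{\bar a}\}$ and $\{B^{\bar y}_{\bar b}\}$ indexed by $u$-tuples of questions --- achieving value $v:=\omega_q(\calG^{\otimes u})$, and set $\eta:=1-\omega_q(\calG)$, which we may assume is positive. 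For $\calC\subseteq[u]$ let $W_\calC$ denote the event that the players answer correctly in every coordinate $i\in\calC$. Winning $\calG^{\otimes u}$ forces winning every coordinate, so $v\le\Pr[W_\calC]$ for each $\calC$, and it therefore suffices to exhibit a set $\calC$ for which $\Pr[W_\calC]$ is exponentially small in $u$.

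The core is a single-round reduction: I would show that there are constants $C',c'>0$ such that for every proper subset $\calC\subsetneq[u]$ with $\Pr[W_\calC]$ not already below the target, there is a coordinate $j\notin\calC$ with
$$\Pr\big[\,\text{coordinate }j\text{ won}\ \big|\ W_\calC\,\big]\ \le\ 1-C'\eta^{c'}.$$
To prove this, one first checks by a chain-rule estimate for the relevant classical--quantum mutual information that, averaged over a uniformly random $j\notin\calC$, the correlation between coordinate $j$'s questions $(X_j,Y_j)$ and the side information carried by the other coordinates (their questions together with the $W_\calC$-conditioned quantum registers) has order $\tfrac{1}{u-|\calC|}\log\tfrac{1}{\Pr[W_\calC]}$; hence for a good $j$ the conditioned input distribution in coordinate $j$ is close to $\pi$ and nearly decoupled from the side information. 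One then builds an honest strategy for the single game $\calG$ out of the $W_\calC$-conditioned repeated strategy: the provers use shared randomness to sample the remaining coordinates' questions, each locally (approximately) prepares the $W_\calC$-conditioned global state, inserts the genuine coordinate-$j$ question, measures with $A^{\bar x}$ resp.\ $B^{\bar y}$, and outputs the $j$-th answer. The projection property of $\calG$ is precisely what makes Bob's winning answer consistently determined and is used to control the loss in correctness, so a conditional winning probability in coordinate $j$ too close to $1$ would contradict $\omega_q(\calG)=1-\eta$.

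The hard part --- and the only place where the entangled case genuinely departs from the classical information-theoretic argument \cite{Raz98} --- will be the step ``each prover locally prepares the $W_\calC$-conditioned state''. Classically, conditioning on an event is free and both provers agree on the conditioned distribution; quantumly, $W_\calC$ is defined through measurements on \emph{both} halves of $\ket{\psi}$, its post-measurement state is not preparable by either prover alone, and the provers cannot coordinate on which measurement branch they are in without communicating. The remedy is a quantum correlated sampling lemma: if two bipartite states are close in trace distance and one prover's marginal is essentially undisturbed, then there is a local isometry on the other prover's side --- aided by a shared purifying register --- producing a global state close to the target. One applies this with the target being the $W_\calC$-conditioned state and the reference being $\ket{\psi}$ with a fresh coordinate-$j$ question inserted; the small-correlation estimate above supplies its hypotheses. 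Tracking how trace-distance errors propagate through this isometry into the winning probability is the delicate, lossy part of the analysis, and is what produces the polynomial dependence $\eta^{c}$ rather than a linear one.

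Finally I would chain the reduction: starting from $\calC=\emptyset$, repeatedly either stop because $\Pr[W_\calC]$ has dropped below the target, or enlarge $\calC$ by the coordinate $j$ produced above, paying a factor $1-C'\eta^{c'}$ because $\Pr[W_{\calC\cup\{j\}}]=\Pr[W_\calC]\cdot\Pr[\text{coord }j\mid W_\calC]$. After $\Theta(u)$ rounds this gives $v\le(1-C'\eta^{c'})^{\Theta(u)}$, and with standard bookkeeping --- using the projection structure to avoid the usual answer-alphabet overhead --- one obtains the stated bound with exponent $u/2$.
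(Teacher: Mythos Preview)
The paper does not prove this statement; it quotes \cite[Theorem~1.1]{Dinur2015} and uses it as a black box, exactly as you note at the outset. So there is nothing to compare against in the present paper beyond the citation.

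That said, the proof sketch you supply is \emph{not} the argument of \cite{Dinur2015}. What you describe is the information-theoretic template of Raz--Holenstein, lifted to entangled provers via a ``quantum correlated sampling'' step. Dinur--Steurer--Vidick deliberately avoid this route. Instead they introduce an analytical relaxation $\mathrm{val}_+$ of the entangled value, defined through vector-valued strategies, and show three things: (i) $\omega_q(\calG^{\otimes u})\le \mathrm{val}_+(\calG^{\otimes u})$; (ii) $\mathrm{val}_+$ is (essentially) multiplicative under repetition; and (iii) for projection games $\mathrm{val}_+(\calG)$ is bounded by $1$ minus a polynomial in $1-\omega_q(\calG)$. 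Chaining these gives the theorem directly, and the exponent $u/2$ arises from the square structure of the vector strategies. This extends the classical Dinur--Steurer relaxation approach rather than the Raz conditioning approach.

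The distinction is not cosmetic. The step you flag as ``the hard part'' --- having both provers locally prepare the $W_{\calC}$-conditioned state --- is precisely the obstruction that has no known direct solution for arbitrary projection games; this is why \cite{Dinur2015} circumvents it with the relaxation method rather than attacking it head-on. Information-theoretic entangled parallel repetition was later obtained (e.g.\ \cite{Yuen2016}) but through additional structural devices such as anchoring, not through a generic quantum correlated-sampling lemma of the kind you invoke. So while your outline is a fair caricature of how \emph{some} parallel-repetition proofs proceed, attributing it to \cite{Dinur2015} is inaccurate, and the lemma you rely on in the crucial step is not available in the generality you assume.
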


We remark that it is possible to replace the above parallel repetition theorem with alternative theorems, such as the one given in \cite{Yuen2016}.
However, the current proof still relies on certain features of projection games.
In particular, these are essential for the definability of item \ref{def:longcodetest}.(5).
It may be possible to generalize the test to accommodate a broader class of BCS games, but we leave this as an open direction for future work.

In addition to the parallel repetition theorem, it has been shown in \cite{Dinur2015} that any nonlocal game can be mapped in polynomial time to a projection game, as we describe next.

Let $\calG=\{\calI,\calO,\pi,D)$ be a given nonlocal game.
Its {\em projection}, denoted by $\calG^{proj}$, is the game in which the referee samples a pair of questions $(i_1,i_2)\sim\pi$, sends both of them to Alice and one of them to Bob, each with probability $1/2$.
Let $i_c$ be the question sent to Bob.
Alice is then required to respond with a pair $(a_1,a_2)\in\calO\times \calO$ and Bob with $b\in\calO$.
They win if and only if $D(i_1,i_2,a_1,a_2)=1$ and $a_c=b$.
We have the following bound on the quantum game value of the projected game.

\begin{lem}[{\cite[Claim 2.8]{Dinur2015}}]\label{lem:proj_by_oggame}
    For any non local game $\calG$,
    $$\omega_q(\calG^{proj})\leq\sqrt{\frac{1+\omega_q(\calG)}{2}}\;.$$
\end{lem}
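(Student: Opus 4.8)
# Proof Proposal for Lemma \ref{lem:proj_by_oggame}

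The plan is to reduce a strategy for the projected game $\calG^{proj}$ to a strategy for a closely related "oracularized" version of $\calG$, and then bound the value of the latter by something involving $\omega_q(\calG)$. Since the exact statement we want is a clean square-root bound, I expect the proof to go through a symmetrization/averaging trick where the crucial win event in $\calG^{proj}$ either tests Alice's consistency against Bob or tests $D$ directly, and each occurs with probability $1/2$.

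\textbf{Step 1: Set up the strategy.} Suppose we are given a quantum strategy for $\calG^{proj}$: a state $\ket{\psi}\in\calH_A\otimes\calH_B$, Alice's POVMs $\{A^{(i_1,i_2)}_{(a_1,a_2)}\}$ indexed by pairs of questions, and Bob's POVMs $\{B^i_b\}$ indexed by single questions. The winning probability is
$$
\omega(\calG^{proj})=\underset{(i_1,i_2)\sim\pi}{\bbE}\;\frac12\sum_{c\in\{1,2\}}\sum_{\substack{(a_1,a_2),b\\ D(i_1,i_2,a_1,a_2)=1,\ a_c=b}}\bra{\psi}A^{(i_1,i_2)}_{(a_1,a_2)}\otimes B^{i_c}_{b}\ket{\psi}\;.
$$
I would first split this according to whether $D(i_1,i_2,a_1,a_2)=1$, retaining only terms where the predicate holds, and then use the constraint $a_c=b$.

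\textbf{Step 2: Extract a strategy for $\calG$ from Alice alone.} The key idea: use Alice's marginal measurement on the pair $(i_1,i_2)$ to define a new two-prover strategy for the \emph{original} game $\calG$ in which both new provers use variants of Alice's apparatus. Concretely, to answer question $i_1$ the first new prover applies Alice's POVM for a random completion $(i_1,i_2)$ with $i_2\sim\pi(\cdot\mid i_1)$ and reports $a_1$; similarly for $i_2$. The probability that Alice's own pair of answers $(a_1,a_2)$ passes $D$ — call this quantity $q$ — is, by the projection property and a Cauchy–Schwarz step, controlled by how consistent Alice is with Bob, which is exactly what the $a_c=b$ check measures. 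This lets me write, schematically, $\omega(\calG^{proj}) \le \frac12 + \frac12\cdot(\text{something}) $, where the "something" is bounded using that Bob's answer is forced (projection!) once Alice's is fixed.

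\textbf{Step 3: Close the bound.} Let $v=\omega(\calG^{proj})$. The previous step should yield an inequality of the form $v \le \tfrac{1}{2}(1+w)$ where $w\le\omega_q(\calG)$ after recognizing the induced correlation as a legal quantum strategy for $\calG$ — or, more likely, an inequality like $v^2 \le \tfrac{1+\omega_q(\calG)}{2}$ directly, obtained by applying Cauchy–Schwarz to the sum over $(i_1,i_2)$ with the two "halves" $c=1,2$ playing symmetric roles, so that the cross terms reproduce a strategy for $\calG$ run on a pair of independent questions. Rearranging gives $\omega_q(\calG^{proj})\le\sqrt{(1+\omega_q(\calG))/2}$, and taking the supremum over strategies finishes the proof.

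\textbf{Main obstacle.} The delicate part is Step 2: making the Cauchy–Schwarz application produce \emph{exactly} a quantum correlation for $\calG$ (with a genuine shared state and valid POVMs) rather than merely an abstract bound, and ensuring that the projection hypothesis is used in the one place it is needed — namely, that once Alice commits to $a_c$, there is at most one winning $b$, so Bob's consistency check and the predicate check cannot be "won for free" by hedging. I would be careful that the averaging over the choice $c\in\{1,2\}$ is what supplies the factor $1/2$ inside the square root, and that no dimension blow-up or illegal post-selection sneaks in. Everything else — the trace manipulations, the $\|\cdot\|_{hs}$ bounds — is routine given Lemma \ref{lem:CS}.
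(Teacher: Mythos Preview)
The paper does not supply a proof of this lemma: it is quoted verbatim as \cite[Claim~2.8]{Dinur2015} and used as a black box. So there is no in-paper argument to compare against; your proposal is an attempt to reconstruct the original.

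Your outline has the right shape, and it matches the argument in \cite{Dinur2015}: one writes $\omega(\calG^{proj},p)=\bbE_{(i_1,i_2)}\sum_{a:D(a)=1}\bra{\psi}A_a\otimes \tfrac12(B^{i_1}_{a_1}+B^{i_2}_{a_2})\ket{\psi}$, applies Cauchy--Schwarz across the index $(i_1,i_2,a)$ with the splitting $(A_a^{1/2}\otimes I)\ket{\psi}$ versus $(A_a^{1/2}\otimes C_a)\ket{\psi}$, bounds the first factor by $1$ via $\sum_a A_a\le\bbI$, and then shows the second factor is at most $\tfrac12(1+w)$ for some $w\le\omega_q(\calG)$. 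But as you yourself flag under ``Main obstacle'', you have not actually carried out that last step---exhibiting the honest quantum strategy for $\calG$ whose value is the surviving cross term. Your Step~2 wavers between using Alice's marginals and using Bob's operators, and Step~3 offers two different target inequalities without committing to either. Until the construction is written down and verified to be a legal strategy (valid POVMs, no post-selection), what you have is a plan rather than a proof.

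Two smaller corrections. First, there is no ``projection hypothesis'' on $\calG$: the lemma is stated for an arbitrary nonlocal game, and the only projection structure in sight is the built-in consistency check $a_c=b$ in the \emph{definition} of $\calG^{proj}$, which you do not need to invoke separately. Second, the Cauchy--Schwarz you want is the ordinary one for vectors in $\calH_A\otimes\calH_B$, not the Hilbert--Schmidt matrix version of Lemma~\ref{lem:CS}; the closing remark about ``$\|\cdot\|_{hs}$ bounds'' being routine is therefore pointed at the wrong tool.
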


In general, there is no inequality in the reverse direction. However, we can make the following observation.

\begin{lem}\label{lem:perf_orac_to_perf_proj}
    Suppose that $\calG$ has a perfect oracularizable strategy, then $\calG^{proj}$ also has such a strategy.
\end{lem}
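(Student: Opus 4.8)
The plan is to take a perfect oracularizable synchronous strategy for $\calG$ and build from it a perfect strategy for $\calG^{proj}$ directly, using the commutation property to make Alice's two-question answer well-defined and consistent. Recall that in $\calG^{proj}$ the referee samples $(i_1,i_2)\sim\pi$, sends both to Alice, and sends $i_c$ (for a uniformly random $c\in\{1,2\}$) to Bob; the players win iff $D(i_1,i_2,a_1,a_2)=1$ and $a_c=b$. Since $\calG$ is synchronous and the given strategy is synchronous, we have $\calH=\calH_A=\calH_B$, the shared state is $\ket{\psi_{MES}}$, each $\{A^x_a\}_a$ is a PVM, and $B^x_a=(A^x_a)^T$.

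First I would define Bob's strategy in $\calG^{proj}$ on question $i_c$ to be exactly his original strategy in $\calG$ on that question, i.e. the PVM transpose $\{(A^{i_c}_b)^T\}_b$ (equivalently his original POVMs). For Alice, on the pair $(i_1,i_2)$, I would use the product PVM: measure with $\{A^{i_1}_{a_1}\}$ and then with $\{A^{i_2}_{a_2}\}$. The key point is that oracularizability gives $A^{i_1}_{a_1}A^{i_2}_{a_2}=A^{i_2}_{a_2}A^{i_1}_{a_1}$ for $(i_1,i_2)\in\supp(\pi)$, so the operators $\{A^{i_1}_{a_1}A^{i_2}_{a_2}\}_{(a_1,a_2)}$ form a genuine PVM on $\calH$ and Alice's joint answer $(a_1,a_2)$ is well-defined. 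Then I would compute the winning probability: the probability of outcome $(a_1,a_2)$ for Alice and $b$ for Bob, conditioned on $c$, is $\bra{\psi_{MES}}A^{i_1}_{a_1}A^{i_2}_{a_2}\otimes (A^{i_c}_b)^T\ket{\psi_{MES}} = \frac{1}{d}\Tr(A^{i_1}_{a_1}A^{i_2}_{a_2}A^{i_c}_b)$, using the identity $\bra{\psi_{MES}}X\otimes Y^T\ket{\psi_{MES}}=\frac1d\Tr(XY)$.

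Next I would argue that this correlation is perfect. Because the original synchronous strategy is perfect for $\calG$, on question pair $(i_1,i_2)$ Alice's and Bob's outcomes must satisfy $D$ with probability one; in particular, measuring $\{A^{i_1}_{a_1}\}$ and $\{A^{i_2}_{a_2}\}$ on $\ket{\psi_{MES}}$ yields, with probability one, a pair $(a_1,a_2)$ with $D(i_1,i_2,a_1,a_2)=1$ — this follows from synchronicity (a single prover can simulate both by measuring in sequence, legitimately since the operators commute) together with perfectness. Then for the consistency clause $a_c=b$: when Bob receives $i_c$ and measures $\{(A^{i_c}_b)^T\}$, synchronicity of the original strategy (the $D(x,x,a,b)=0$ for $a\neq b$ property applied through the correlation $\frac1d\Tr(A^{i_c}_{a_c}A^{i_c}_b)=\delta_{a_c,b}\cdot\frac1d\Tr(A^{i_c}_{a_c})$ since the $A^{i_c}$ form a PVM) forces Alice's marginal outcome $a_c$ and Bob's outcome $b$ to agree with probability one. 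Combining, both winning conditions hold simultaneously with probability one, so $\omega_q(\calG^{proj})=1$, and moreover I should check the resulting strategy is itself oracularizable if that is needed downstream (it is, since all of Alice's and Bob's observables are built from the commuting family $\{A^x_a\}$, though the statement as written only asks for a perfect strategy).

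The main obstacle I anticipate is bookkeeping around the fact that $\calG^{proj}$ is an asymmetric game (Alice and Bob get different question sets and different answer sets), so the strategy is not literally "synchronous" in the sense defined earlier, and I need to be careful that the notion of "perfect strategy" I produce is the plain quantum one. In particular I must make sure the commutation $[A^{i_1}_{a_1},A^{i_2}_{a_2}]=0$ is available for the specific pairs $(i_1,i_2)$ that arise, which is exactly $\supp(\pi)$ — precisely the set on which oracularizability is assumed — so this goes through. A secondary subtlety is justifying "Alice measuring the product PVM on $\ket{\psi_{MES}}$ reproduces the original joint statistics of the two separate measurements": this is where I use that in a synchronous strategy a single party measuring the commuting PVMs in sequence has the same output distribution as the two-party correlation evaluated at $(i_1,i_2)$, which is a standard consequence of the maximally entangled state identity plus commutativity. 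Everything else is a short computation.
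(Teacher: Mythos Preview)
Your proposal is correct and follows essentially the same construction as the paper: define Alice's PVM on a pair $(i_1,i_2)\in\supp(\pi)$ as the product $A^{i_1}_{a_1}A^{i_2}_{a_2}$ (a genuine PVM by oracularizability), keep Bob's PVM on a single question $i_c$ as $\{A^{i_c}_b\}$ (transposed), and then verify perfectness via the identities $A^{i_1}_{a_1}A^{i_2}_{a_2}A^{i_c}_b=\delta_{a_c,b}A^{i_1}_{a_1}A^{i_2}_{a_2}$ and $\tfrac{1}{d}\Tr(A^{i_1}_{a_1}A^{i_2}_{a_2})=0$ whenever $D(i_1,i_2,a_1,a_2)=0$. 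One small correction: the phrase ``such a strategy'' in the statement does mean a perfect \emph{oracularizable} (hence synchronous) strategy, not merely a perfect one; you already note this is easy to check, and indeed the paper's own proof is equally terse on that point.
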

\begin{proof}
    Let $\{A^x_a\}_{a\in\calO}$ be the perfect oracularizable strategy for $\calG$.
    We define the following strategy $p$ for $\calG^{proj}$.
    For every pair $(x,y)\in\supp(\pi)$, we use the PVM $\{A^{xy}_{ab}\}_{(a,b)\in\calO^2}$, which is defined by $A^{xy}_{ab}=A^x_aA^y_b$.
    This is a well-defined PVM.
    For every question $x$, we use the given PVM $\{A^x_a\}_a$.
    Since we assumed that the given strategy is perfect and oracularizable, we have $A^{xy}_{ab}A^x_c=\delta_{ac}A^{xy}_{ab}$ and similarly $A^{xy}_{ab}A^y_c=\delta_{bc}A^{xy}_{ab}$.
    It also implies that if $(x,y)\in\supp(\pi)$ and $a,b\in\calO$ are such that $D(x,y,a,b)=0$, then $1/d\Tr(A^{xy}_{ab})=0$.
    Therefore, $\omega(\calG^{proj},p)=1$.
\end{proof}

In view of the game $\calG^{proj}$, it will be important for us that for any pair of questions that the verifier will sample in $\calG$, there is a pair of answers that satisfies the decider predicate.
To this end, we provide the following lemma.

\begin{lem}\label{lem:no_empty_correct_answers}
    Let $\calG$ be a nonlocal game.
    Then, there exists a nonlocal game $\calG'$, in which to any sampled questions there is a pair of answers that satisfies the decider, and such that $\omega_q(\calG')\leq\omega_q(\calG)+1/2(1-\omega_q(\calG))$
\end{lem}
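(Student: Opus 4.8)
The plan is to construct $\calG'$ from $\calG$ by adding, for every question pair in the support of $\pi$, an explicit ``default'' pair of answers that the decider accepts, and to have the verifier fall back on accepting these default answers with some controlled probability. Concretely, for each $(x,y)\in\supp(\pi)$ I would pick an arbitrary pair $(a_{x,y},b_{x,y})\in\calO^2$ (extending the answer alphabet by one symbol per player if necessary, so that such a ``neutral'' answer is available and is never the image of the original honest play) and declare $D'(x,y,a_{x,y},b_{x,y})=1$. The new game $\calG'$ then plays as follows: sample $(x,y)\sim\pi$; with probability $1/2$ run the original game $\calG$ with predicate $D$, and with probability $1/2$ instead accept exactly the pair $(a_{x,y},b_{x,y})$ (i.e.\ use the predicate that is $1$ only on that pair). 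By construction, for every sampled $(x,y)$ there is now an answer pair satisfying $D'$, namely the default one, so $\calG'$ has the required property.

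Next I would analyze $\omega_q(\calG')$. Fix any quantum strategy. Its winning probability in $\calG'$ is the average of two terms: on the ``$\calG$-branch'' the contribution is at most $\omega_q(\calG)$ by definition of the quantum value; on the ``default branch'' the contribution is the probability, under the strategy, that the players happen to output precisely $(a_{x,y},b_{x,y})$, which is at most $1$. Hence $\omega_q(\calG')\le \tfrac12\omega_q(\calG)+\tfrac12 = \omega_q(\calG)+\tfrac12(1-\omega_q(\calG))$, which is exactly the claimed bound. One should also note $\omega_q(\calG')\ge\tfrac12$ always (play the default answers), so the transformation is only useful when $\omega_q(\calG)$ is already bounded away from $1$; this is consistent with how the lemma will be used downstream. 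If one wants to preserve the projection property for later use, one checks that the default-answer predicate is trivially projection (a single accepted pair per question pair), so $\calG'$ is projection whenever $\calG$ is; similarly the synchronous structure can be maintained by choosing $a_{x,x}=b_{x,x}$.

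There is essentially no deep obstacle here; the only thing requiring a little care is the alphabet bookkeeping. If the default answers are drawn from the \emph{existing} alphabet $\calO$, then an honest strategy for $\calG$ could accidentally coincide with a default pair, but this does not hurt the upper bound argument (we only used ``$\le 1$'' on that branch), so in fact one does not even need to enlarge $\calO$ --- picking $a_{x,y},b_{x,y}$ arbitrarily in $\calO$ suffices for the stated inequality. The mild subtlety, and the step I would be most careful about, is making sure the randomization over the two branches is implemented as a genuine single nonlocal game in the format $(\calI,\calO',\pi',D')$: one standard way is to keep the question set $\calI$ and distribution $\pi$ unchanged and simply define $D'(x,y,a,b)=\tfrac12 D(x,y,a,b)+\tfrac12\,\delta_{(a,b),(a_{x,y},b_{x,y})}$ as a $\{0,\tfrac12,1\}$-valued weight, or equivalently fold the coin flip into the question by tagging $x$ with a bit; either way the winning probability of any strategy is the convex combination above, and the bound follows. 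I would write it with the tagged-question formulation to stay within the stated $\{0,1\}$-valued predicate model.
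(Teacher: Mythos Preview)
Your construction does deliver the inequality, but it fails to deliver the other half of the lemma --- that every sampled question pair admits a satisfying answer. Once you ``fold the coin flip into the question by tagging $x$ with a bit,'' the sampled question pair on the $c=0$ branch is $((x,0),(y,0))$, and on that branch your predicate is literally $D(x,y,\cdot,\cdot)$. So for any $(x,y)$ in the set $P:=\{(x,y)\in\supp(\pi):D(x,y,\cdot,\cdot)\equiv 0\}$, the question pair $((x,0),(y,0))$ still has no accepting answer; the default pair only helps on the $c=1$ branch. Your sentence ``for every sampled $(x,y)$ there is now an answer pair satisfying $D'$'' conflates the underlying pair $(x,y)$ with the actual question pair $((x,c),(y,c))$ in $\calG'$. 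The $\{0,\tfrac12,1\}$-valued alternative you mention would sidestep this, but it leaves the $\{0,1\}$-predicate model (which is what the downstream BCS construction needs: the constraint $C_{ii'}=\{(a,b):D'(i,i',a,b)=1\}$ must be nonempty). Simply OR-ing the default pair into the $c=0$ predicate does restore the property, but then your clean bound $p_0\le\omega_q(\calG)$ on that branch no longer holds, and with a globally known default the players can just always answer it and win with probability $1$.

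The paper's construction is genuinely different and avoids this trap. It leaves every pair $(i,j)\notin P$ completely untouched (those already have satisfying answers), and only replaces the pairs in $P$: for each $(i,j)\in P$ it splits the mass $\pi(i,j)$ equally between two new question pairs $((i,j,+1),(i,j))$ and $((i,j,-1),(i,j))$, with decider ``both answer the bit $a$.'' Now every question pair has an accepting answer by design, and since Bob sees only $(i,j)$ (not the bit $a$), the combined winning probability on those two rounds is at most $\tfrac12\pi(i,j)$. The bound then comes from the observation $\pi(P)\le 1-\omega_q(\calG)$, giving $\omega_q(\calG')\le\omega_q(\calG)+\tfrac12\pi(P)\le\omega_q(\calG)+\tfrac12(1-\omega_q(\calG))$. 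The essential idea you are missing is that the ``default'' branch must be a nontrivial game with value at most $\tfrac12$ (here, Bob guessing a hidden bit), not a game with a publicly known accepting pair.
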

\begin{proof}
    The question set in $\calG'$ is $\calI^2\cup \calI\cup\calI^2\times\{\pm1\}$ and the answer set is $\calO\uplus\{\pm1\}$.
    Let $P\subset\supp(\pi)$ be the collection of pairs of questions $(i,j)$ such that for every $(a,b)\in\calO^2$, we have $D(i,j,a,b)=0$.
    For every $(i,j)\in \supp(\pi)\backslash P$, we let $\pi'(i,j)=\pi(i,j)$.
    For $(i,j)\in P$, we let $\pi'((i,j,1),(i,j))=\pi'((i,j,-1),(i,j))=\pi(i,j)/2$.
    We set $\pi'$ to be zero on any other pair.
    
    The decider $D'$ is defined as follows.
    Given $(i,j)\in\supp(\pi)\backslash P$, $D'(i,j,a,b)=D(i,j,a,b)$ for every $a,b\in\calO$, and zero otherwise.
    Given $(i,j)\in P$ and $a\in\{\pm1\}$, we set $D'((i,j,a),(i,j),b,c)$ to be $1$ if and only if $a=b=c$.
    
    Let $p=(\{A^i_{a}\},\{B^j_b\},\ket{\psi})$ be a quantum strategy for $\calG'$.
    First, note that given any question of the form $(i,j,a)\in\calI^2\times\{\pm1\}$, we may assume that Alice answers with $a$ (because otherwise they will lose the round for sure).
    That is, $A^{(i,j,a)}_a=\bbI_{\calH_A}$.
    In particular, the combined winning probability in the two rounds that correspond to the question pairs $((i,j,1),(i,j))$ and $((i,j,-1),(i,j))$, is bounded by $$\pi(i,j)(\bra{\psi}A^{(i,j,1)}_1\otimes B^{(i,j)}_1\ket{\psi}+\bra{\psi}A^{(i,j,-1)}_{-1}\otimes B^{(i,j)}_{-1}\ket{\psi})/2= \pi(i,j)/2\;.$$
    Note that since we have $\omega_q(\calG)\leq 1-\pi(P)$, we have $\pi(P)\leq1-\omega_q(\calG)$. Finally,
    \begin{align*}
        \omega(\calG',p) &=\underset{x,y\sim\pi'}{\bbE}\bigg[\sum_{a,b\in\calO'}p(a,b|x,y)D'(x,y,a,b)\bigg]\\
        &\leq\underset{(i,j)\sim\pi}{\bbE}\bigg[1_{(i,j)\notin P}\sum_{a,b\in\calO}p(a,b|i,j)D(i,j,a,b)\bigg]\\
        &+\underset{(i,j)\sim\pi}{\bbE}\bigg[(1_{(i,j)\in P})/2\bigg]\\
        &\leq \omega_q(\calG)+\pi(P)/2\leq \omega_q(\calG)+(1-\omega_q(\calG))/2\;.
    \end{align*}
    
\end{proof}

Note that if $\calG$ admits a perfect strategy, then it must be that $\calG'=\calG$.

%In particular, for a synchronous projection game,
%$$\omega_q^s(\calG^{\otimes u})\leq (1-C(1-\omega_q(\calG))^c)^{u/2}.$$

\section{BCS and LCS games}
\label{sec:BCS}
\subsection{Boolean constraint system games}
A {\em Boolean constraint} consists of a finite set $S$ of variables (called the {\em context}) and a relation $C\subset \{\pm1\}^S$ (called the {\em constraint}).
The constraint is also considered as a function $C\in\calF_S$.
An assignment $\phi\in\{\pm1\}^S$ {\em satisfies} the constraint $(S,C)$ if $\phi\in C$, or equivalently, $C(\phi)=-1$.
A {\em Boolean constraint system} (BCS) consists of a  (finite) set of variables $X$, and a finite collection of Boolean constraints $\{(S_i,C_i)\}_{i\in[n]}$, where $S_i\subset X$ for every $i\in[n]$.
An assignment $\phi\in\{\pm1\}^X$ {\em satisfies } the BCS $(X,\{(S_i,C_i)\}_{i\in[m]})$, if for every $i\in[m]$, $\phi|_{S_i}$ satisfies $(S_i,C_i)$.

Given a BCS $B=(X,\{(S_i,C_i)\}_{i\in[m]})$ and a probability $\pi$ on $[m]\times[m]$, we can associate a {\em BCS game}, which is the following nonlocal game.
\begin{definition}\label{def:BCSgame}
Denote by $\calG(B,\pi)$ the nonlocal game in which the verifier samples a pair of indices $(i_A,i_B)\sim \pi$, and then send $i_A$ to Alice and $i_B$ to Bob.
The players then need to respond with a pair of assignments $\phi_X\in C_{i_X}$ for $X\in\{A,B\}$.
They win if and only if $(\#)\ \phi_A|_{S_{i_A}\cap S_{i_B}}=\phi_B|_{S_{i_A}\cap S_{i_B}}$.
Formally, $\calG(B,\pi)=([m],\{C_i\}_{i\in[m]},\pi,D)$, where $D$ is defined by the rule $(\#)$.
\end{definition}

In \cite{Dong25}, the authors showed that it is possible to reduce the halting problem to a $\MIP^*$ protocol, with polynomial length questions and answers of constant length.
Culf and Mastel observed in \cite{CM25} that this reduction sends the yes instances to synchronous nonlocal games with perfect oracularizable strategies.
They then use it to build another reduction to a BCS protocol.
We use their reduction in the following construction.

Let $\calG=(\calI,\{\calO_i\},\pi,D)$ be a synchronous nonlocal game.
We will associate $\calG$ with a BCS B and probability measure $\pi^{proj}$ such that $\calG(B,\pi)=\calG^{proj}$.

\begin{definition}\label{def:theProjectedBCS} Let $\calG$ be as above and
suppose that $\calI\subset \{\ 0,1\}^{n}$ and $\calO_i\subset\{0,1\}^{m_i}$, and define the following BCS.
The set of variables $X$ consists of elements $x_{ij}$, for every $i\in \calI$ and $j\in[m_i]$.
We define $S_i=\{x_{ij}\ : \ j\in[m_i]\}$, and we identify the set $\{\pm1\}^{S_i}$ with the set of strings $\{0,1\}^{m_i}$.
Let $C_i\subset \{\pm1\}^{S_i}$ be the image of $\calO_i$ under this identification.
For every pair $(i,i')\in\supp(\pi)$, define the set $S_{ii'}:=S_i\cup S_{i'}$ and $C_{ii'}\subset\{\pm1\}^{S_{ii'}}=\{\pm1\}^{S_i}\times\{\pm1\}^{S_{i'}}$ as the set of pairs $(a,b)$ such that $D(i,i',a,b)=1$.
Note that when $i'=i$, then $S_{ii}=S_i$, so in this case $C_{ii}\subset C_i$ is the subset of elements $a\in\calO_i$ such that $D(i,i,a,a)=1$. 
We let $\calI'=\calI\cup\supp(\pi)$.
Let $\pi^{proj}$ be the probability measure on $\calI'\times \calI'$ that is defined as follows
\begin{align*}
    \pi^{proj}(k,k')\begin{cases}
        \frac{1}{2}\pi(i,i') & \quad k=(i,i'), k'=i \\
        \frac{1}{2}\pi(i,i') & \quad k=(i,i'),k'=i'\\
        0 & \quad \text{otherwise}
    \end{cases}
\end{align*}
\end{definition}

%Observe that $\calG(B,\pi^{proj})=\calG^{proj}$ up to a small technicality.
%In $\calG^{proj}$ if Alice received the pair $(i,i)$, she needs to respond with a pair of answers, whereas in $\calG(B,\pi^{proj})$ she responds with only one.
%As the game $\calG$ is assumed to be synchronous, this is equivalent\tnote{I find this discussion a bit confusing. Equivalent in what sense? Doesn't there exist (non synchronous) strategies for the one game that seem different from the other?}\tnote{Wouldn't this discussion become unnecessary if we forced a disjoint union in the definition, when defining $S_{ii'}$? We could label things differently?}.
%It is possible to define $B$ so that $C_{ii}$ will be a set of pairs (for example, take in $X$ two distinct copies of $S_i$, such that one of the copies is used only to define $C_{ii}$).
%We ignore this complication and assume that $C_{ii}$ is composed of pairs.

\subsection{Linear constraint system games}\label{sec:LCS}
A {\em linear constraint system} (LCS) $B=(X,\{S_i,C_i\}_{i\in[m]})$ is a BCS in which for every $i\in[m]$, $C_i$ is the collection of satisfying assignments of a Boolean linear equation on the set $S_i$.
Namely, there exists $b_i\in\{\pm1\}$ such that 
$$C_i:=\{\phi\in\bbZ^{S_i}_2\ : \ \underset{x_j\in S_i}{\prod}\phi(x_j)=b_i\}$$

We associate with an LCS the {\em constraint-variable} variation of a constraint system game, which is the following game.
This is in contrast to our convention of associating with a general BCS the {\em constraint-constraint} game, as given in Definition \ref{def:BCSgame}.
For a detailed examination of the relationship between these two game variations for a given BCS and between the corresponding synchronous game values, see \cite{CM25}.

Given an LCS $B$ and a probability $\pi$ on $[m]$, the {\em LCS game} $\calG^{LCS}(B,\pi)$ is defined as follows.
The referee samples $i\sim\pi$ and $x_j\in S_i$ uniformly at random.
She then sends $i$ to Alice and $x_j$ to Bob.
Alice is then required to respond with $\phi\in C_i$, and Bob with $b\in\{\pm1\}$.
They win if and only if $\phi(x_j)=b$.

%We will use the presentation of an LCS $B$ using its matrix formulation $Mx=b$ where $M\in M_{k\times n}(\{0,1\})$, $b\in \{0,1\}^k$ and $n=|X|$.
%Here, given an equation in $B$, we add multiple copies of it to the rows of $M$, so that its number of appearances is proportional to the probability of sampling it by $\pi$.
%Therefore, $k$ depends on $\pi$.\tnote{So, $\pi$ needs to be rational for this to be possible? Do you really want this definition; can't we augment the usual representation with a distribution on the rows of $M$?}
%In particular, the sampling process in $\calG^{LCS}(B,\pi)$ is described by sampling a row in $M$ uniformly at random.

Next, we present a simple formula for the bias of a synchronous strategy for an LCS game.
To this end, let $B=(X,\{S_i,C_i\}_{i\in[m]})$ be an LCS and $\pi$ a probability on $[m]\times[m]$.
For $i\in[m]$, we denote by $V_i$ the set of $j\in[n]$ such that $x_j\in S_i$.
Let $p$ be a synchronous strategy with PVMs $\{Y^i_{\alpha}\}_{\alpha\in C_i}$ and $\{X^j_{b}\}_{b\in\{\ \pm1\}}$ for all $i\in[m]$ and $1\leq j\leq n$.
Define $A_j^i:=\sum_{a\in C_i}(-1)^{a(j)}Y^i_a$ and $B^j=X^j_1-X^j_{-1}$.
It is an exercise to verify that for all $i\in [m]$, $j'\in V_i$ and $j\in [n]$, $B^j$ and $A^i_{j'}$ are binary observables, and $\{A_j^i\}_{j\in V_i}$ pairwise commute.

For $\alpha\in \{\pm1\}^{S_i}$, define $A^i_{\alpha}:=\prod_{x_j\in \alpha}A^i_j$.
Similarly to Lemma~\ref{lem:fourierprop}, for every $a\in \{\pm1\}^{S_i}$, $Y^i_a=\underset{\alpha\in \{\pm1\}^{S_i}}{\bbE}[(-1)^{a\cdot \alpha}A^i_{\alpha}]$, where $a\cdot \alpha=\sum_{x_j\in S_i}(1+a(x_j)\alpha(x_j))/2$.
Moreover, any strategy for Alice and Bob for an LCS game, in which Alice does not necessarily answer with satisfying assignments, can be turned into a corrected strategy with the same winning probability.
In particular, any synchronous strategy for an LCS game can be described equivalently using binary observables with the above properties.
We have the following.

\begin{lem}\label{lem:observationLCS}
    Let $B=(X,\{S_i,C_i\}_{i\in[m]})$ be an LCS, $\pi$  a probability on $[m]$, and let $p$ be a synchronous strategy with PVMs $\{Y^i_{a}\}_{a\in C_i}$ and $\{X^j_{b}\}_{b\in\{\pm1\}}$.
    Denote by $\{A^i_j\}_{j\in V_i}$ and $B^j$ the corresponding binary observables, as above.
    Then,
    $$\beta(\calG(B,\pi),p)=\underset{i\sim\pi}{\bbE}\big[\underset{j\in V_i}{\bbE}[\langle A^i_j,B^j\rangle]\big]\;.$$
\end{lem}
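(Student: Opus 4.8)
The plan is to unfold the definition of the winning probability of the LCS game $\calG^{LCS}(B,\pi)$ under the given synchronous strategy and simplify it, converting the PVMs $\{Y^i_a\}$ and $\{X^j_b\}$ into the associated binary observables $A^i_j$ and $B^j$.

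First I would write, using that the referee sends $i$ to Alice and a uniformly random $x_j$ with $j\in V_i$ to Bob, and that the round is won precisely when Alice's value $a(x_j)$ agrees with Bob's bit $b$,
\[
\omega(\calG^{LCS}(B,\pi),p)=\underset{i\sim\pi}{\bbE}\;\underset{j\in V_i}{\bbE}\;\sum_{a\in C_i} p\big(a,\,a(x_j)\mid i,\,x_j\big).
\]
Since $p$ is synchronous and built on the maximally entangled state, $p(a,b\mid i,x_j)=\tfrac1d\Tr(Y^i_a X^j_b)=\langle Y^i_a,X^j_b\rangle$ (after absorbing the transpose on Bob's side into $X^j_b$, as in Section~\ref{sec:nonlocalgames}), so the inner sum equals $\sum_{a\in C_i}\tfrac1d\Tr\big(Y^i_a X^j_{a(x_j)}\big)$.

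Next I would substitute $X^j_b=\tfrac12(\bbI+bB^j)$, which is valid because $\{X^j_1,X^j_{-1}\}$ is a PVM with $X^j_1-X^j_{-1}=B^j$, and split the sum into
\[
\tfrac12\sum_{a\in C_i}\tfrac1d\Tr(Y^i_a)\;+\;\tfrac12\sum_{a\in C_i}a(x_j)\,\tfrac1d\Tr(Y^i_a B^j).
\]
The first term equals $\tfrac12$ because $\sum_{a\in C_i}Y^i_a=\bbI$ (it is a PVM), and the second equals $\tfrac12\langle A^i_j,B^j\rangle$ because $\sum_{a\in C_i}a(x_j)Y^i_a=A^i_j$ by definition of $A^i_j$, once one matches the factor $(-1)^{a(j)}$ with the $\pm1$-valued $a(x_j)$. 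Plugging back in yields $\omega(\calG^{LCS}(B,\pi),p)=\tfrac12+\tfrac12\,\bbE_{i\sim\pi}\bbE_{j\in V_i}\langle A^i_j,B^j\rangle$, and since $\beta=2\omega-1$ this is exactly the claimed identity.

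I do not expect a genuine obstacle: the statement is essentially a bookkeeping computation. The only mildly delicate points are (i) confirming that the synchronous-correlation formula $p(a,b\mid i,x_j)=\tfrac1d\Tr(Y^i_a X^j_b)$ is available in the constraint--variable setting, where Alice and Bob have different question sets, so one must be careful about where the transpose lands; and (ii) keeping the additive $\{0,1\}$ and multiplicative $\{\pm1\}$ encodings consistent when identifying $\sum_{a\in C_i}a(x_j)Y^i_a$ with $A^i_j=\sum_{a\in C_i}(-1)^{a(j)}Y^i_a$.
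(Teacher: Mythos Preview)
Your proposal is correct and follows essentially the same computation as the paper: both unfold the winning probability, use the PVM normalization $\sum_a Y^i_a=\bbI$ (the paper writes this as $1=\sum_\alpha\sum_\beta\langle Y^i_\alpha,X^j_\beta\rangle$ and subtracts, you substitute $X^j_b=\tfrac12(\bbI+bB^j)$ and split), and then collapse $\sum_a a(x_j)Y^i_a$ into $A^i_j$. The two ``delicate points'' you flag are handled in the paper exactly as you anticipate, so there is no obstacle.
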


\begin{proof}
For an equation $i\in[n]$, if $\alpha\in\{\pm1\}^{S_i}$ does not satisfy the $i$th equation, then we set $Y^i_{\alpha}=0$.
Thus,
    \begin{align*}
        2\omega(\calG(B,\pi),p)-1&=2\underset{i\sim\pi}{\bbE}\bigg[\underset{j\in V_i}{\bbE}\bigg[\sum_{\alpha }\langle Y^i_{\alpha},X^j_{\alpha_j}\rangle\bigg]\bigg]-\underset{i\sim\pi}{\bbE}\bigg[\underset{j\in V_i}{\bbE}\bigg[\sum_{\alpha }\sum_{\beta\in\{-1,1\}}\langle Y^i_{\alpha},X^j_{\beta}\rangle\bigg]\bigg] \\
        & = \underset{i\sim\pi}{\bbE}\bigg[\underset{j\in V_i}{\bbE}\bigg[\sum_{\alpha }(\langle Y^i_{\alpha},X^j_{\alpha_j}\rangle-\langle Y^i_{\alpha},X^j_{1-\alpha_j}\rangle)\bigg]\bigg] \\
        & = \underset{i\sim\pi}{\bbE}\bigg[\underset{j\in V_i}{\bbE}\bigg[\sum_{\alpha }(-1)^{\alpha_j}\langle Y^i_{\alpha},B^j\rangle\bigg]\bigg]\\
        & = \underset{i\sim\pi}{\bbE}\bigg[\underset{j\in V_i}{\bbE}\bigg[\langle\sum_{\alpha }(-1)^{\alpha_j} Y^i_{\alpha},B^j\rangle\bigg]\bigg]\\
        & = \underset{i\sim\pi}{\bbE}\big[\underset{j\in V_i}{\bbE}[\langle A^i_j,B^j\rangle]\big]\;.
    \end{align*}
\end{proof}

\section{The long code test for BCS games with entangled provers}\label{sec:test}
For the rest of this section, fix a synchronous nonlocal game $\calG$, and let $B$ and $\pi:=\pi^{proj}$ be as in Definition~\ref{def:BCSgame}. 
We assume that all the constraints in $B$ are nonempty.
This assumption is required for the definability of simultaneously folding over true and conditioning upon a constraint. 
Later, we will ensure that the input BCS for the tester meets this requirement by applying Lemma \ref{lem:no_empty_correct_answers}.

%We note the following observation.\tnote{Should this observation come later? It's a bit confusing when we haven't yet seen what the construction is. Maybe it should come after Definition 4.1. It could be accompanied by a description of any noteworthy changes to the proof, possibly pointing to where the argument might not be tight (in terms of the soundness constants)}
%A significant alteration we implemented in the structure of
%the original test lies in its simulation of the u-fold repetition game. 
%In our test, if a variable recurs across the iterations of the game, the players are allowed to assign distinct values to it in each occurrence. 
%This modification is crucial, as there may not be a perfect assignment to the BCS, even though there exists a perfect quantum strategy.

For every finite set $U$ and $f\in\calF_U$, we use the abbreviations $f_U:=s_U(f)$ and $m_f:=m(f\cdot f_U)$.
Given another $C\in\calF_U$, which is not the empty set, recall the definitions of $s_{f,C}$ and $m_{f,C}$.

\begin{definition}[The $L^{\epsilon}(u,B,\pi)$ test]\label{drf:maintest}
    Let $\{B^l=(X^l,\{S^l_i,C^l_i\}_{i\in\calI})\}_{l\in[u]}$ be $u$ distinct copies of B.

    \begin{enumerate}\label{def:longcodetest}
        \item Pick $u$ pairs $\{(i_l',j_l')\}_{l\in[u]}$ according to $\pi$, each pair independently of the others.
        Suppose that $i_l'=(i_l,j_l)$ for every $l\in[u]$, and so $j_l'\in\{i_l,j_l\}$.
        Define $U=\uplus S_{j_l'}^l$, $W=\uplus S_{i_l'}^l$, and $C=\underset{l\in[u]}{\prod}C_{i_l'}^l$.

        \item Choose $f\in \calF_U$ uniformly at random.
        \item Choose $g\in\calF_W$ uniformly at random.
        \item Choose a function $\mu\in\calF_W$ by setting $\mu(y)=1 $ with probability $1-\epsilon$ and $\mu(y)=-1$ otherwise, independently for each $y\in\{\pm1\}^W$.
        \item Set $g'=fg\mu$ to be the function in $\calF_W$ that is defined by $y\mapsto f(y|_U)g(y)\mu(y)$.
        \item Send Alice the tuple $(W,U,C,f,g,g')$.
        \item Choose $X\in\{(U,f_U),(W,s_{g,C}),(W,s_{g',C})\} $ uniformly at random and send it to Bob.
        \item Receive three bits from Alice, $(a_{(U,f_U)},a_{(W,(W,s_{g,C}))},a_{(W,(W,s_{g,C}))})$, and one bit $b\in\{\pm1\}$ from Bob.
        \item Accept if and only if $a_X=b$ and 
        \begin{equation*}\label{eq:code-test}
        a_{(U,f_U)}a_{(W,(W,s_{g,C}))}a_{(W,(W,s_{g',C}))}=m_fm_{g,C}m_{g',C}\;.
        \end{equation*}
    \end{enumerate}

\end{definition}

%We note that it is assumed that $W$ and $U$ also indicate the tuples $\overline{i}:=(i_1',..,i_u')$ and $\overline{j}:=(j_1',..,j'_u)$.

%Before stating the main result of this section, let us provide an equivalent presentation for the test $L^{\epsilon}(u,B)$ as a linear constraint system, which we denote by $(M(u,B),b(u,B))=(M,b)$.
%In some of our next arguments, this presentation will be more convenient.

%We consider the set $\Omega$ as a multiset, in which the number of copies of a given tuple $(W,U,C,f,g,g')$ is proportional to the probability of sampling it in the test\tnote{This way of doing again requires things to be rational}. 

\begin{prop}[Soundness of the long-code test]\label{prop:soundness}
    Fix some $\epsilon,\delta>0$ and suppose that $\omega_q^s(L^{\epsilon}(u,B,\pi))\geq 1-\frac{1}{36}(1-\delta)^2$.
    Then, $\omega_q(\calG(B,\pi)^{\otimes u})\geq 4\epsilon\delta^2$. 
\end{prop}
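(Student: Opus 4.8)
The plan is to run the operator version of H\r{a}stad's analysis of the noisy long--code test. The essential new point over the classical case is that Alice's three answer bits no longer come from a single function, so the observable of their product need not factor through her three per--coordinate marginals; the projective (synchronous) structure together with Lemma~\ref{lem:general1} is exactly what bridges this gap, at the price of a square--root loss.

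\textbf{Step 1 (reduction to folded observables).} Fix a near--optimal synchronous strategy $p$ for $L^{\epsilon}(u,B,\pi)$, with PVMs on a space $\calH$ of dimension $d$ and the maximally entangled state, and set $\eta:=1-\omega_q^s(L^{\epsilon}(u,B,\pi))\le\frac1{36}(1-\delta)^2$. From Bob's PVMs I extract, for each sub--question $(V,h)$ he may receive, the binary observable $B^{(V,h)}$; from Alice's PVM on a question $q=(W,U,C,f,g,g')$ and each coordinate $i\in\{1,2,3\}$ I extract her marginal binary observable $A^q_{[i]}$. Since Alice's measurement is projective, $A^q_{[1]},A^q_{[2]},A^q_{[3]}$ pairwise commute and $T^q:=A^q_{[1]}A^q_{[2]}A^q_{[3]}$ is the observable of the product $a_1a_2a_3$. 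Expanding the acceptance probability via $p(a,b|x,y)=\frac1d\Tr(A^x_aA^y_b)$, the consistency clause forces $\bbE_q[\frac1d\Tr(A^q_{[i]}B^{X_i})]\ge1-O(\eta)$, and the linearity clause forces $\bbE_q[m_q\frac1d\Tr T^q]\ge1-2\eta$; applying Lemma~\ref{lem:general1} to $T^q=A^q_{[1]}A^q_{[2]}A^q_{[3]}$ and the (cyclic shifts of the) product $B^{X_1}B^{X_2}B^{X_3}$, and then folding over true and conditioning on $C$ (Definitions~\ref{def:folding},~\ref{def:conditioning}) to form the families $\tilde B^U=\{\tilde B^U_f\}_f$ and $\tilde B^{W,C}=\{\tilde B^{W,C}_g\}_g$ of binary observables, I obtain
$$M:=\bbE_{q}\Big[\tfrac1d\Tr\!\big(\tilde B^U_{f}\,\tilde B^{W,C}_{g}\,\tilde B^{W,C}_{g'}\big)\Big]\ \ge\ 1-6\sqrt\eta\ \ge\ \delta,$$
the last inequality being precisely where the constant $\tfrac1{36}=(\tfrac16)^2$ is used together with $\eta\le\frac1{36}(1-\delta)^2$. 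I expect tracking the constants through this step --- separating the consistency and linearity contributions cleanly and absorbing the lower--order terms --- to be the most delicate part.

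\textbf{Step 2 (Fourier expansion).} Expand $\tilde B^U_f=\sum_\alpha\chi_\alpha(f)\hat B^U_\alpha$, $\tilde B^{W,C}_g=\sum_\beta\chi_\beta(g)\hat B^{W,C}_\beta$, $\tilde B^{W,C}_{g'}=\sum_\gamma\chi_\gamma(g')\hat B^{W,C}_\gamma$, and use the identity $\chi_\gamma(g')=\chi_{\pi_2^U(\gamma)}(f)\,\chi_\gamma(g)\,\chi_\gamma(\mu)$ --- this is exactly the role of the projection map $\pi_2^U$. Orthogonality of the characters in $f$ and $g$ forces $\alpha=\pi_2^U(\gamma)$ and $\beta=\gamma$, while $\bbE_\mu[\chi_\gamma(\mu)]=(1-2\epsilon)^{|\gamma|}$, so
$$M=\bbE_{(1)}\Big[\sum_{\gamma}(1-2\epsilon)^{|\gamma|}\,\tfrac1d\Tr\!\big(\hat B^U_{\pi_2^U(\gamma)}(\hat B^{W,C}_\gamma)^2\big)\Big],$$
with $\bbE_{(1)}$ the expectation over step~(1) of the test. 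By Lemma~\ref{lem:folding} only $\gamma$ with $|\gamma|$ odd contribute; by Lemma~\ref{lem:conditioning} only $\gamma\subseteq C$ contribute; and since $C=\prod_l C^l_{i'_l}$ is the product constraint on $W$, every $y\in\gamma$ restricts on $U=\uplus_l S^l_{j'_l}$ to an element of the constraint $C_U:=\prod_l C^l_{j'_l}$, whence $\pi_2^U(\gamma)\subseteq C_U$ and, because $|\gamma|$ is odd, $\pi_2^U(\gamma)\neq\emptyset$. This is where the projected form of $\calG$ from Definition~\ref{def:theProjectedBCS} is used.

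\textbf{Step 3 (extraction and Cauchy--Schwarz).} By Parseval (Lemma~\ref{lem:fourierprop}), $\{(\hat B^{W,C}_\gamma)^2\}_\gamma$ and $\{(\hat B^U_\alpha)^2\}_\alpha$ are POVMs. Define a strategy for $\calG(B,\pi)^{\otimes u}$: the player receiving a product tuple (context $W$) measures the first POVM, obtains $\gamma$, and answers a uniformly random $y\in\gamma$ (which lies in $C$); the player receiving a single tuple (context $U\subseteq W$) measures the second POVM restricted to $\{\alpha\subseteq C_U\}$ with a catch--all outcome, obtains $\alpha$, and answers a uniformly random $x\in\alpha$ (in $C_U$). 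On the maximally entangled state the outcome correlation is $\frac1d\Tr((\hat B^{W,C}_\gamma)^2(\hat B^U_\alpha)^2)\ge0$, and since every $x\in\pi_2^U(\gamma)$ has a positive number of preimages in $\gamma$, conditioned on $(\gamma,\alpha=\pi_2^U(\gamma))$ the answers agree on $U$ with probability at least $1/|\gamma|$. Hence
$$\omega_q\big(\calG(B,\pi)^{\otimes u}\big)\ \ge\ \bbE_{(1)}\Big[\sum_{\gamma}\tfrac1{|\gamma|}\,\tfrac1d\Tr\!\big((\hat B^U_{\pi_2^U(\gamma)})^2(\hat B^{W,C}_\gamma)^2\big)\Big]=:P.$$
Writing (per instance of step~(1)) $M=\sum_\gamma\big\langle(1-2\epsilon)^{|\gamma|}\hat B^U_{\pi_2^U(\gamma)}\hat B^{W,C}_\gamma,\ \hat B^{W,C}_\gamma\big\rangle$ and applying Lemma~\ref{lem:CS}, the second Cauchy--Schwarz factor is $\le1$ by Parseval and in the first factor $(1-2\epsilon)^{2|\gamma|}\le\frac1{4\epsilon|\gamma|}$, so $M^2\le\frac1{4\epsilon}P$. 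Combining with $M\ge\delta$ from Step~1 gives $\omega_q(\calG(B,\pi)^{\otimes u})\ge P\ge4\epsilon M^2\ge4\epsilon\delta^2$, as required.

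The crux, and the point most likely to demand care, is Step~1: classically Alice is a function, so the product of her three bits is literally the product of the three dictators being tested and the passage to the Fourier side is exact, whereas quantumly one must pay the square root of Lemma~\ref{lem:general1} to replace $A^q_{[1]}A^q_{[2]}A^q_{[3]}$ by a product of Bob's observables --- and this is exactly what turns the classical soundness $5/6$ into $1-(1/6)^2=35/36$. A secondary point is verifying, as in Step~2, that the Fourier supports of the folded observables lie inside the relevant constraint sets, so that the answers produced in Step~3 are admissible assignments for $\calG(B,\pi)^{\otimes u}$.
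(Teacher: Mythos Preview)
Your proposal is correct and matches the paper's proof in structure and in every key step: Lemma~\ref{lem:general1} to pass to a triple product of Bob's folded/conditioned observables (this is the paper's Claim~\ref{claim:lowerbound}), then H\r{a}stad's Fourier computation, then the squared-Fourier-coefficient POVMs together with Cauchy--Schwarz and the inequality $(1-2\epsilon)^{2|\gamma|}\le(4\epsilon|\gamma|)^{-1}$. The one place the paper is slightly cleaner is exactly the spot you flag as delicate: rather than separately bounding a linearity contribution and a consistency contribution (which, as written, would leave a stray lower-order $2\eta$ term in your inequality $M\ge 1-6\sqrt\eta$), the paper first observes that without loss of generality Alice's PVM is supported on triples satisfying the linear equation, so that after folding $A^q_{[1]}A^q_{[2]}A^q_{[3]}=\bbI$ identically; then the entire bias $\beta(L,p)$ is carried by consistency (Lemma~\ref{lem:observationLCS}), and Lemma~\ref{lem:general1} plus Jensen give $|1-M|\le(18(1-\beta))^{1/2}=(36\eta)^{1/2}$ on the nose, recovering the constant $\tfrac1{36}$ exactly.
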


\begin{proof}
    First, we explicitly define the sets of queries and variables of the linear system underlying the test.
    Let $\Omega$ be the collection of all possible tuples $(W,U,C,f,g,g')$ that can be generated at the end of step $(5)$ \ref{def:longcodetest}.
    Denote by $R$ the set of all tuples $(U,f_U)$ and $(W,s_{g,C})$ for all possible $U,W,C,g\in\calF_W$ and $f\in \calF_U$ that can be generated in this process.
    We define the set of variables $Z=\{z_t:t\in R\}$.  

    Let $p$ be a synchronous strategy for the test $L:=L^{\epsilon}(u,B,\pi)$ such that $\omega(L,p)\geq1-\frac{1}{36}(1-\delta)^2$.
    For every $\omega=(W,U,C,f,g,g')\in \Omega$, let us denote by $\Tilde{A}^{U}_f, \Tilde{A}^{W,C}_{g}$ and $\Tilde{A}^{W,C}_{g'}$ the binary observables for Alice's answers for equation $\omega$.
    For every $t\in R$, let $B_t$ be the binary observable that corresponds to Bob's answer to the query $t$.

    As in H\r{a}stad original PCP, we take the above observables after conditioning them and folding them over true.
    For every $(W,U,C,f,g,g')\in \Omega$, define $A^U_f:=m_f\Tilde{A}^U_f, A^{W,C}_g:=m_{g,C}\Tilde{A}^{W,C}_g,A^{W,C}_{g'}:=m_{g',C}\Tilde{A}^{W,C}_{g'}$, $B^{U}_f:=m_fB_{(U,f_U)}$, and $B^{W,C}_{g}=m_{g,C}B_{(W,s_{g,C})}$.

    %It is a bit technical, but it is worth noting that given an element $\omega \in\Omega$, it is possible that Alice uses different observables for different copies of $\omega$ in $\Omega$.
   % Here, we assume that these observables are always the same, as we may.
    %Indeed, since Bob always uses the same observables, we can replace Alice's measurements with the measurement that provides the highest winning probability among all the rounds that correspond to copies of $\omega$.

    \begin{claim}\label{claim:lowerbound}
        The following holds:
        \begin{align}\label{equ:lowerbound}
            \Big|1-\underset{W,U,C}{\bbE}\big[\underset{f,g,g'}{\bbE}[\frac{1}{d}\Tr(B^U_fB^{W,C}_gB^{W,C}_{g'})]\big]\Big|\leq 1-\delta \;.
        \end{align}
    \end{claim}
    
\begin{proof}
    First, we assume, as we may, that Alice's responses to a query $(W,U,C,f,g,g')\in\Omega$ always satisfy the required equation.
    That is,
    $$A^U_fA_g^{W,C}A^{W,C}_{g'}=\bbI\;.$$
    In particular, the left hand side of Equation \eqref{equ:lowerbound} turns into 
    \begin{equation}\label{eq:dagger}
    \Big|\underset{W,U,C}{\bbE}\big[\underset{f,g,g'}{\bbE}[\frac{1}{d}\Tr(A^U_fA_g^{W,C}A^{W,C}_{g'}- B^U_fB^{W,C}_gB^{W,C}_{g'})]\big]\Big|\;.
    \end{equation}
    To ease notation, let us denote by $Y^{\omega}_j$ the (folded and conditioned) observables of Alice for the query $\omega\in\Omega$, and by $X^{\omega}_j$ the (folded and conditioned) observables of Bob that correspond to the variables in equation $\omega$.
    Using Lemma \ref{lem:general1}, we have the following bound 
    \begin{align*}
        \eqref{eq:dagger} &\leq \underset{\omega\in\Omega}{\bbE}\big[(6(3-\sum_{j=1}^3\langle Y^{\omega}_j,X^{\omega}_j\rangle))^{\frac{1}{2}}\big]\\
        &\leq\big(\underset{\omega\in\Omega}{\bbE}\big[6(3-\sum_{j=1}^3\langle Y^{\omega}_j,X^{\omega}_j\rangle)\big]\big)^{\frac{1}{2}}\\
        &=\big(\underset{\omega\in\Omega}{\bbE}\big[18(1-\frac{1}{3}\sum_{j=1}^3\langle Y^{\omega}_j,X^{\omega}_j\rangle)\big]\big)^{\frac{1}{2}}\\
        & = \big(18(1-\underset{\omega\in\Omega}{\bbE}\big[\frac{1}{3}\sum_{j=1}^3\langle Y^{\omega}_j,X^{\omega}_j\rangle\big])\big)^{\frac{1}{2}}\;,
    \end{align*}
where in the second inequality we used Jensen's inequality $\bbE[X^2]\geq\bbE[X]^2$.
Finally, note that since we conditioned and folded the observables of Alice and Bob that correspond to the same variable in the same way, the last summation is, in fact, the bias of the strategy $p$ in the game.
Thus,
$$|1-\underset{W,U,C}{\bbE}\big[\underset{f,g,g'}{\bbE}[\frac{1}{d}\Tr(B^U_fB^{W,C}_gB^{W,C}_{g'})]\big]|\leq (18(1-\beta(L,p))^{\frac{1}{2}}\;.$$
    The claim follows by the assumption on $\omega(L,p)$.
\end{proof}
    The proof continues almost identically to that in \cite{Hstad2001}. 
    Fix $W,U$ and $C$ as in Definition~\ref{def:longcodetest}(1), and denote by $A_f:=B^U_f$ and $B_g:=B^{W,C}_g$ for $f\in\calF_U$ and $g\in\calF_W$.
    Our focus is on the term   
    \begin{align}\label{equ:fixedWUC1}
        \underset{f,g,g'}{\bbE}[A_fB_gB_{g'}]
    \end{align}
    We replace each observable in \eqref{equ:fixedWUC1} by its Fourier expansion (as we may according to Lemma \ref{lem:fourierprop}(1)):
    \begin{align}
        & \underset{f,g,\mu}{\bbE}\big[\sum_{\alpha,\beta,\beta'}\hat{A}_{\alpha}\hat{B}_{\beta}\hat{B}_{\beta'}\chi_{\alpha}(f)\chi_{\beta}(g)\chi_{\beta'}(g')\big]\nonumber \\
        & = \sum_{\alpha,\beta,\beta'}\hat{A}_{\alpha}\hat{B}_{\beta}\hat{B}_{\beta'}\underset{f,g,\mu}{\bbE}[\chi_{\alpha}(f)\chi_{\beta}(g)\chi_{\beta'}(fg\mu)] \;.\label{equ:fixedWUC2}
    \end{align}
    Following the analysis of $\underset{f,g,\mu}{\bbE}[\chi_{\alpha}(f)\chi_{\beta}(g)\chi_{\beta'}(fg\mu)] $ in Lemma 5.2 in \cite{Hstad2001}, equation \eqref{equ:fixedWUC2} turns into
    \begin{align}
        \sum_{\beta}\hat{A}_{\pi_2(\beta)}\hat{B}^2_{\beta}(1-2\epsilon)^{|\beta|},
    \end{align}
    where we recall that $\pi_2(\beta)\subset\{\pm1\}^U$ contains all the elements $x$ for which there are odd number of elements $y\in\beta$ such that $y|_U=x$.
    So in total, we got so far the following inequality,
    \begin{align}
        \big|\underset{W,U,C}{\bbE}[\sum_{\beta\subset\{\pm1\}^W}\frac{1}{d}\Tr(\hat{B}^U_{\pi_2(\beta)}(\hat{B}^{W,C}_{\beta})^2)(1-2\epsilon)^{|\beta|}]\big|\geq \delta. \label{equ:soundmainlowerbound}
    \end{align}

    Let us describe the strategies of Alice and Bob for the parallel game $\calG(B,\pi)^{\otimes u}$.
    Each player will hold a copy of $\bbC^d$, and they will share a maximally entangled state.
    Suppose that in $\calG(B,\pi)^{\otimes u}$ the verifier sampled the pairs $\{(i'_l,j'_l)\}_{l\in[u]}$.
    Let $W,U$ and $C$ be as in Definition~\ref{def:longcodetest}(1).
    \begin{itemize}
        \item Upon receiving the query $\{j'_l\}_{l\in[u]}$, Bob will measure according to $\{((\hat{B}_{\alpha}^U)^T)^2\}_{\alpha\subset \{\pm1\}^U}$.
        Given that the result of the measurement was $\alpha$, he will choose $\phi\in\alpha$ uniformly at random and will respond with $\phi$.
        \item Upon receiving the query $\{i'_l\}_{l\in[u]}$, Alice will measure according to $\{(\hat{B}^{W,C}_{\beta})^2\}_{\beta\subset\{\pm1\}^W}$.
        Given that the result of the measurement was $\beta$, she will choose $\psi\in\beta$ uniformly at random and will respond with $\psi$.
    \end{itemize}
We claim that the winning probability of Alice and Bob in a given round is bounded by the probability that Alice has measured $\beta$, Bob has measured $\pi_2(\beta)$ times $|\beta|^{-1}$.
First, if Alice's result of the measurement was $\beta$, then by Lemmata \ref{lem:folding} and \ref{lem:conditioning}, $\beta\neq\varnothing$ and for every $\psi\in\beta$, $\psi\in C$.  
If $\phi\in\pi_2(\beta)$, then the last fact implies that for every $l\in[u]$, $ \phi|_{S_{j'_l}}\in C_{j'_l}$.
Finally, for every $\phi\in\pi_2(\beta)$, there exists at least one $\psi\in \beta$ with $\psi|_U=\phi$.
Thus, in total, their winning probability is at least $$\sum_{\beta}\frac{1}{d}\Tr((\hat{B}_{\pi_2(\beta)}^U)^2(\hat{B}^{W,C}_{\beta})^2)|\beta|^{-1}\;.$$
Using the Cauchy-Schwarz inequality (Lemma \ref{lem:CS}), and the fact that the Fourier coefficient of binary observables in the sense of Definition \ref{def:fourier} is Hermitian, we have
\begin{align*}
    & |\sum_{\beta}\frac{(1-2\epsilon)^{|\beta|}}{d}\Tr(\hat{B}_{\pi_2(\beta)}^U(\hat{B}^{W,C}_{\beta})^2)|=|\sum_{\beta}\langle \hat{B}^{W,C}_{\beta}, \hat{B}_{\pi_2(\beta)}^U\hat{B}^{W,C}_{\beta}(1-2\epsilon)^{|\beta|}\rangle| \\
    &\leq (\sum_{\beta}\|\hat{B}^{W,C}_{\beta}\|_{hs}^2)^{\frac{1}{2}}(\sum_{\beta}\|\hat{B}_{\pi_2(\beta)}^U\hat{B}^{W,C}_{\beta}(1-2\epsilon)^{|\beta|}\|_{hs}^2)^{\frac{1}{2}} \\
    &\leq( \sum_{\beta}\frac{(1-2\epsilon)^{2|\beta|}}{d}\Tr((\hat{B}_{\pi_2(\beta)}^U\hat{B}^{W,C}_{\beta})^{\dagger}(\hat{B}_{\pi_2(\beta)}^U\hat{B}^{W,C}_{\beta})))^{\frac{1}{2}}\\
    &=(\sum_{\beta}\frac{1}{d}\Tr((\hat{B}_{\pi_2(\beta)}^U)^2(\hat{B}^{W,C}_{\beta})^2)(1-2\epsilon)^{2|\beta|})^{\frac{1}{2}}\;,
\end{align*}
where we used the fact that $\sum_{\beta}\|\hat{B}^{W,C}_{\beta}\|_{hs}^2=\frac{1}{d}\Tr(\sum_{\beta}(\hat{B}^{W,C}_{\beta})^2)=\frac{1}{d}\Tr(\bbI)=1$.
Using Jensen's inequality $\bbE[X^2]\geq\bbE[X]^2$, we get that
\begin{align}
     \underset{W,U,C}{\bbE}\bigg[\sum_{\beta}\frac{1}{d}\Tr((\hat{B}_{\pi_2(\beta)}^U)^2(\hat{B}^{W,C}_{\beta})^2)(1-2\epsilon)^{2|\beta|})\bigg]& \geq \underset{W,U,C}{\bbE}\bigg[\bigg(\sum_{\beta}\frac{(1-2\epsilon)^{|\beta|}}{d}\Tr(\hat{B}_{\pi_2(\beta)}^U(\hat{B}^{W,C}_{\beta})^2)\bigg)^2\bigg]\nonumber \\
     & \geq \bigg(\underset{W,U,C}{\bbE}\bigg[\sum_{\beta}\frac{(1-2\epsilon)^{|\beta|}}{d}\Tr(\hat{B}_{\pi_2(\beta)}^U(\hat{B}^{W,C}_{\beta})^2)\bigg]\bigg)^2\;. \label{equ:soundalmostfinal}
\end{align}
As observed in \cite{Hstad2001}, we have $|\beta|^{-\frac{1}{2}}\geq (4\epsilon)^{\frac{1}{2}}(1-2\epsilon)^{|\beta|}$.
Combining this with \eqref{equ:soundalmostfinal} and \eqref{equ:soundmainlowerbound} we get the required bound
\begin{align*}
   & \underset{W,U,C}{\bbE}\bigg[\sum_{\beta}\frac{1}{d}\Tr((\hat{B}_{\pi_2(\beta)}^U)^2(\hat{B}^{W,C}_{\beta})^2)|\beta|^{-1})\bigg]\\
   &\geq \underset{W,U,C}{\bbE}\bigg[\sum_{\beta}\frac{1}{d}\Tr((\hat{B}_{\pi_2(\beta)}^U)^2(\hat{B}^{W,C}_{\beta})^2)4\epsilon(1-2\epsilon)^{2|\beta|})\bigg] \\
   & \geq 4\epsilon \bigg(\underset{W,U,C}{\bbE}\bigg[\sum_{\beta}\frac{1}{d}\Tr(\hat{B}^U_{\pi_2(\beta)}(\hat{B}^{W,C}_{\beta})^2)(1-2\epsilon)^{|\beta|}\bigg]\bigg)^2\\ 
& \geq 4\epsilon\delta^2\;.
\end{align*}

\end{proof}

We end this section by showing the completeness of the test.
\begin{lem}\label{lem:test_completeness}
    Suppose that $\calG(B,\pi)$ has a perfect synchronous strategy.
    Then, $$\omega_q^s(L^{\epsilon}(u,B,\pi))\geq1-\epsilon\;.$$
\end{lem}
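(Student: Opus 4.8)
The plan is to follow H\r{a}stad's classical completeness argument, replacing his satisfying assignment by the given perfect synchronous strategy for $\calG(B,\pi)$. Write $\{P^k_a\}_{a\in C_k}$ for that strategy's PVMs — one family per constraint-index $k$ of $B$, acting on $\calH$ with $\dim\calH=d$, with the maximally entangled state $\ket{\psi_{MES}}$ and Bob using the transposed families. Since the test $L:=L^{\epsilon}(u,B,\pi)$ runs $u$ independent copies $B^l$, I would work on $\calH^{\otimes u}$ with the factorised maximally entangled state, and for a tuple of indices $(k_1,\dots,k_u)$ set $P_{(a_1,\dots,a_u)}:=\bigotimes_l P^{k_l}_{a_l}$, a PVM indexed by $\prod_l C^l_{k_l}$, which I identify with a subset of $\{\pm1\}^{\uplus_l S^l_{k_l}}$.

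Next I would describe the lifted strategy. Fix a run in which the verifier sampled $\{(i'_l,j'_l)\}$, so $U\subseteq W$ and $C=\prod_l C^l_{i'_l}$. On input $(W,U,C,f,g,g')$, Alice measures $\{P_a\}_{a\in C}$ (built from the indices $i'_l$), obtains $a\in C\subseteq\{\pm1\}^W$, and answers the three bits $f_U(a|_U)$, $s_{g,C}(a)$, $s_{g',C}(a)$. The crucial point is that this PVM depends only on the sampled indices and on $C$, not on $f,g,g'$ or $\mu$, so $a$ is independent of the noise $\mu$. On input $(U,f_U)$ Bob measures the transposed product PVM $\{Q_b\}$ built from the $j'_l$, obtains $b\in\prod_l C^l_{j'_l}\subseteq\{\pm1\}^U$, and answers $f_U(b)$; on input $(W,s_{g,C})$ or $(W,s_{g',C})$ he measures $\{P_a^T\}_{a\in C}$, obtains $a'$, and answers $s_{g,C}(a')$ or $s_{g',C}(a')$ accordingly (I take the query to Bob to carry the sampled indices, so $C$ and the relevant PVMs are determined). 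All measurements are PVMs and Bob's are transposes of Alice's, so this is a synchronous strategy.

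Then I would verify the two acceptance conditions. When $X\in\{(W,s_{g,C}),(W,s_{g',C})\}$, Alice and Bob measure the same PVM up to transpose on $\ket{\psi_{MES}}^{\otimes u}$, hence obtain $a'=a$ with probability $1$ and return the same bit. When $X=(U,f_U)$, the probability of outcomes $(a,b)$ factorises as $\prod_l\langle P^{i'_l}_{a_l},P^{j'_l}_{b_l}\rangle$; since $(i'_l,j'_l)\in\supp(\pi^{proj})$ with $S^l_{j'_l}\subseteq S^l_{i'_l}$, perfectness of the underlying strategy forces each factor to vanish unless $a_l|_{S^l_{j'_l}}=b_l$, so $b=a|_U$ with probability $1$ and $f_U(b)=f_U(a|_U)$; thus $a_X=b$ always. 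For the linearity check I would use $s_U(f)=m_f f$, the identities $s_{g,C}(a)=m_{g,C}g(a)$ and $s_{g',C}(a)=m_{g',C}g'(a)$ (valid because $a\in C$), and $g'(a)=f(a|_U)g(a)\mu(a)$, to compute
\begin{align*}
f_U(a|_U)\,s_{g,C}(a)\,s_{g',C}(a)&=\big(m_f f(a|_U)\big)\big(m_{g,C}g(a)\big)\big(m_{g',C}f(a|_U)g(a)\mu(a)\big)\\
&=m_f m_{g,C}m_{g',C}\,\mu(a),
\end{align*}
which equals the required value $m_f m_{g,C}m_{g',C}$ exactly when $\mu(a)=1$. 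Since $a$ is independent of $\mu$ and $\mu(y)=1$ with probability $1-\epsilon$ for every $y$, the linearity check passes with probability $1-\epsilon$, and combined with $a_X=b$ holding always this gives winning probability $1-\epsilon$, hence $\omega_q^s(L)\geq 1-\epsilon$.

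I expect no real obstacle, only two points needing care: the folding/conditioning bookkeeping — the identities $s_U(f)=m_f f$ and $s_{g,C}(a)=m_{g,C}g(a)$ for $a\in C$ are exactly what make the product of Alice's answers collapse to $m_f m_{g,C}m_{g',C}\,\mu(a)$ — and the requirement that Alice's PVM be chosen independently of the noise, which is what yields $\Pr[\mu(a)=1]=1-\epsilon$ irrespective of her outcome. Everything else is the routine check that the lifted observables inherit the needed consistency relations, which by the projection structure needs only the pairwise perfectness of the synchronous strategy for $\calG(B,\pi)$, not a stronger (e.g.\ oracularizable) property.
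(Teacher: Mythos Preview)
Your proposal is correct and follows essentially the same approach as the paper's own proof: both lift the given perfect synchronous strategy for $\calG(B,\pi)$ to the $u$-fold tensor product, have Alice and Bob measure with the resulting product PVMs (indexed by $\overline{i}$ and $\overline{j}$ respectively), and read off the answer bits by evaluating the relevant folded/conditioned functions on the measurement outcome. Your verification that the consistency check $a_X=b$ always passes (via the perfectness of the underlying strategy on each tensor factor) and that the linearity check collapses to the single event $\mu(a)=1$ with probability $1-\epsilon$ is exactly the content of the paper's chain~(4.7)--(4.10); your explicit use of the identities $f_U=m_f f$ and $s_{g,C}(a)=m_{g,C}\,g(a)$ for $a\in C$ makes that collapse a bit more transparent than the paper's bookkeeping, but the argument is the same.
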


\begin{proof}
    Let $p$ be a synchronous strategy for $\calG(B,\pi)$ with $\omega(\calG(B,\pi),p)=1$, and let $\{A^{ii'}_{\phi\phi'}\}_{\phi\phi'\in C_{ii'}}$ and $\{A^i_{\phi}\}_{\phi\in C_i}$ be the corresponding PVM's for all $(i,i')\in\supp(\pi)$ and $i\in[m]$.
    Let $\{(i'_l,j'_l)\}_{l\in[u]}$, $W$, $C$,  and $U$ be as in Definition \ref{def:longcodetest}(1).
    We first describe a perfect synchronous strategy for $\calG(B,\pi)^{\otimes u}$, denoted by $p'$.
    In essence, Alice and Bob will execute the strategy $p$ in parallel and independently.
    Denote by $\overline{i}:=(i_1',..,i_u')$ and $\overline{j}:=(j_1',..,j_u')$.
    For every $\phi\in\bbZ_2^W$ and $l\in[u]$ define $\phi_l:=\phi|_{S^l_{i'_l}}$ and $A^{\overline{i}}_\phi:=\bigotimes_{l\in[u]}A^{i_l'}_{\phi_l}$ and for every $\psi\in\bbZ_2^{U}$, define $\psi_l$ similarly, and $A^{\overline{j}}_{\psi}:=\bigotimes_{l\in[u]}A^{j_l'}_{\psi_l}$.
    Note that the above indeed defines valid PVMs for every choice of $\overline{i}$ and $\overline{j}$.
    Also, since $p$ is perfect and the fact that $\Tr(A\otimes B)=\Tr(A)\Tr(B)$,we have $$\frac{1}{d^u}\Tr(A^{\overline{i}}_{\phi}A^{\overline{j}}_{\psi})=\delta_{\phi|_U=\psi}\prod\frac{1}{d}\Tr(A^{i'_l}_{\phi_l}A^{j'_l}_{\psi_l})\;.$$
    Finally, if there exists $i'_l$ such that $\phi_l\notin C_{i'_l}$, then $A^{\overline{i}}_{\phi}=0$, as required.
    Now, it is simple to transform the strategy $p'$ into an almost perfect strategy for $L$.
    In fact, for every $g\in\calF_W$ and $b\in\{\pm1\}$ define $$A^{W,g}_b:=\sum_{\phi:g(\phi)=b}A^{\overline{i}}_{\phi}.$$
    The PVM $A^{U,f}_b$ is defined similarly.  
    Given a tuple $\omega=(W,U,C,f,g,g')$, define the projective measurement with outcomes in $\bbZ_2^3$ by
    $$A^{\omega}_{\overline{b}}:=\sum_{\phi:(f(\phi|_U),g(\phi),g'(\phi))=\overline{b}}A^{\overline{i}}_{\phi}\;,$$
    for every $\overline{b}\in\bbZ_2^3\;.$
    It is immediate to verify that all the above measurements are well defined PVM's.
    The corresponding strategy can be described as follows. 
    Upon receiving the tuple $(W,U,C,f,g,g')$, Alice will execute the measurement $\{A^{\overline{i}}_{\phi}\}$ and respond with $(f(\phi|_{U}),g(\phi),g'(\phi))$.
    Similarly, given a tuple of the form $(U,f)$ or $(W,g)$, Bob will act the same way.
    The corresponding PVMs to Alice and Bob strategy, which we denote by $p''$, are precisely the PVMs $\{A^{\omega}_{\overline{b}}\}$, $\{A^{W,g}_b\}$, and $\{A^{U,f}_b\}$.
    Let us compute the winning probability for this strategy.
    To this end, denote by $b_{\omega}$ the corresponding value on the right-hand side of~\eqref{eq:code-test}.
    In what follows, we use the abbreviation $\phi(f,g,g'):=(f(\phi|_U),g(\phi),g'(\phi))$.
    \begin{align}
        \omega(L,p'')&=\underset{\omega}{\bbE}\bigg[\sum_{\overline{b}:b_1b_2b_3=b_{\omega}}\frac{1}{3}\langle A^{\omega}_{\overline{b}},A^{U,f_U}_{{b_1}}+A^{W,s_{g,C}}_{b_2}+A^{W,s_{g',C}}_{b_3}\rangle\bigg]\nonumber \\
        &=\underset{\omega}{\bbE}\bigg[\sum_{\overline{b}:b_1b_2b_3=b_{\omega}}\frac{1}{3}\sum_{\phi:\phi(f_U,g_W,g'_W)=\overline{b}}\langle A^{\overline{i}}_{\phi},A^{\overline{j}}_{\phi|_U}+A^{\overline{i}}_{\phi}+A^{\overline{i}}_{\phi}\rangle\bigg] \label{eq:a7}\\
        & \geq \underset{\omega'}{\bbE}\bigg[\frac{1}{3}\sum_{\phi}\underset{\mu}{\bbE}[\delta_{\mu(\phi)=1}\langle A^{\overline{i}}_{\phi},A^{\overline{j}}_{\phi|_U}+A^{\overline{i}}_{\phi}+A^{\overline{i}}_{\phi}\rangle]\bigg]\label{eq:a8}\\
        & = \underset{\omega'}{\bbE}\bigg[\frac{1}{3}\sum_{\phi}(1-\epsilon)\langle A^{\overline{i}}_{\phi},A^{\overline{j}}_{\phi|_U}+A^{\overline{i}}_{\phi}+A^{\overline{i}}_{\phi}\rangle\bigg] \label{eq:a9}\\
        & = 1-\epsilon\;.\label{eq:a10}
    \end{align}
    Equality~\eqref{eq:a7} is due to the definition of the measurements and the fact that $p'$ is perfect.
    Therefore, for every $\phi$ such that $A^{\overline{i}}_{\phi}\neq0$, we have $\phi\in C$, thus $(g\wedge C)(\phi)=g(\phi)$ for every such $g\in\calF_W$.
    In particular, we use the notation $g_W$ instead of $s_{g,C}$, to emphasize that it is equal to $g$ or $-g$, depending on the choice made.
    If we fix $\omega'=(W,U,C,f,g)$, and $\phi\in\bbZ_2^{W}$, then for every $\mu$ such that $\mu(\phi)=1$, we have $f_U(\phi|_U)g_W(\phi)(fg\mu)_W(\phi)=b_{\omega}$, for $\omega=(\omega',fg\mu)$.
    This implies~\eqref{eq:a8}.
    Part~\eqref{eq:a9} is due only to the definition of the distribution of $\mu$, and Part~\eqref{eq:a10} is due again to the fact that $p'$ is perfect.
    
\end{proof}

We end this section by further discussing the differences between the reduction presented in \cite{vidick2016three} and the present work. 
As mentioned in the introduction, the main difference is that the earlier paper requires three provers.
The presence of a third prover seems to be helpful in simplifying some of the calculations.
For example, the form of the bias of a three-player strategy for the test made it possible to immediately derive Eq.(10) in \cite{vidick2016three} from the calculations in \cite{Hstad2001}. 
In contrast, in the current work, the transition from the two-player strategy to a sole reliance on Bob's operators required us to establish Lemma \ref{lem:general1} and Claim \ref{claim:lowerbound} to derive equation \eqref{equ:soundmainlowerbound}, which is the counterpart to the step represented by Eq.(10) in the context of H\r{a}stad's proof.
Moreover, at the time of the publication of \cite{vidick2016three} only the inclusion $NP\subseteq \MIP^*$ had been established. Thus, a natural question is whether the more recently established equality $\MIP^*=\RE$ can be used in order to conclude $\RE$-hardness of approximating the quantum value of three-prover protocols with XOR decision predicate. We note that even if such a result was achievable, it would constitute a different form of hardness than the one established in the present work, by virtue of the difference in the range of the reductions. Our main motivation for working with two-prover linear games is the large body of work that places these games at the heart of the connection between quantum interactive proof systems and fundamental questions on the (approximate) representation theory of groups.

%A necessary step towards such a hardness result is adjusting the reduction of the halting problem, which is in the basis of $\MIP^*=\RE$, so that its range will contain three-prover protocols.
%This is due to the fact that these are the required input for the reduction in \cite{vidick2016three}.

\section{Main result}\label{sec:main}
\subsection{MIP* and LIN-MIP*}
Let $1\geq c \geq s\geq 0$ be two constants and let $p,q:\bbR_{\geq 0}\rightarrow\bbR_{\geq0}$ be two functions.
The class $\MIP^*_{c,s}[p,q]$ is the collection of all languages $L\subset\{0,1\}^*$ such that there exist two randomized Turing machines $S$ and $D$ with the following properties.
\begin{enumerate}
    \item For every $x\in\{0,1\}^*$ there exists a game $\calG_x=(\calI_x, \ \{\calO_i^x\}_{i\in\calI_x},\pi_x,D_x)$ such that: \begin{itemize}
        \item We have $\log|\calI_x|\leq p(|x|)$ and $\log|\calO_i^x|\leq q(|x|)$ for every $i\in\calI_x$.
        \item Given the input $x$, machine $S$ runs in time $\poly(|x|)$ and returns a pair $(i,j)\in \calI_x^2$ with probability $\pi_x(i,j)$.
        \item Given the input $x$ and a tuple $(i,j,a,b)\in\calI^x\times\calI^x\times\calO_i^x\times\calO_j^x$,  machine $D$ runs in time $\poly(|x|)$ and returns $D_x(i,j,a,b)$.
    \end{itemize}
    \item If $x\in L$, then $\omega_q(\calG_x)\geq c$.
    \item If $x\notin L$, then $\omega_q(\calG_x)\leq s$.
\end{enumerate}

The class $\LIN^*_{c,s}[p,q]$ is defined similarly except that the games $\calG_x$ are LCS games, and the value $\omega_q$ is replaced by $\omega_q^s$.\footnote{Both $\MIP^*$ and $\LIN^*$ can in principle be defined with larger numbers of rounds and players. For simplicity, we consider the $2$-prover $1$-round case.} We choose to measure the value through the synchronous value $\omega_q^s$, which only considers synchronous strategies, because this leads to a more natural and elegant formulation. Nevertheless, as remarked below, in light of known results that relate the synchronous and quantum values for certain classes of games~\cite{Vidick2022}, this choice only slightly affects concrete constants in the statement of our results. 

We have the following result.
\begin{thm}[{\cite[Theorem 6.7]{Dong25}}]\label{thm:RE_MIP_poly_const} There exist a constant $1>s>0$ and an integer $C$ such that $$\bigcup_{p\in\poly}\MIP^*_{1,s}[p,C]=\RE\;.$$
\end{thm}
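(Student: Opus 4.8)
The plan is to invoke the cited result of Dong et al.\ essentially as a black box, since the statement \emph{is} Theorem~6.7 of~\cite{Dong25}, and to spell out how it fits the normalization used in this paper. Concretely, one direction is routine: any language in $\bigcup_{p\in\poly}\MIP^*_{1,s}[p,C]$ is in $\RE$, because the quantum value $\omega_q(\calG_x)$ (over finite-dimensional strategies) can be approximated from below by exhaustive search over increasing dimensions; if $x\in L$ then some finite-dimensional strategy wins with probability exceeding $s$ and will eventually be found, giving a semi-decision procedure. So the content is the reverse inclusion, $\RE\subseteq\bigcup_p\MIP^*_{1,s}[p,C]$, which is exactly the compression/halting-problem reduction of~\cite{Dong25}. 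I would therefore first recall that it suffices to give, for the halting problem $\halt$, a polynomial-time reduction $x\mapsto\calG_x$ with $\log|\calI_x|\le p(|x|)$ for some polynomial $p$, $\log|\calO_i^x|\le C$ for an \emph{absolute} constant $C$, perfect completeness $\omega_q(\calG_x)=1$ when the machine encoded by $x$ halts, and soundness $\omega_q(\calG_x)\le s$ when it does not.

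The key steps, in order, are: (i) start from the $\MIP^*=\RE$ protocol of~\cite{Ji2021} for $\halt$, which already has polynomial-length questions but \emph{non-constant} answer length; (ii) apply the answer-reduction / question-reduction machinery of~\cite{Dong25}, which post-composes the protocol with a PCP-style encoding of the verifier's decision predicate so that each prover only returns a constant number of bits, while the question length stays polynomial; (iii) check that this transformation preserves perfect completeness and only degrades soundness by a constant, so that there is a fixed $1>s>0$ valid for all $x$; (iv) observe that the resulting verifier $(S,D)$ runs in time $\poly(|x|)$, matching the uniformity requirements in the definition of $\MIP^*_{c,s}[p,q]$. Putting (i)--(iv) together gives $\halt\in\bigcup_{p\in\poly}\MIP^*_{1,s}[p,C]$, and since $\halt$ is $\RE$-complete under many-one reductions (and membership in $\MIP^*_{1,s}[p,C]$ is closed under such reductions, composing the reduction with $S$ and $D$), we conclude $\RE\subseteq\bigcup_{p}\MIP^*_{1,s}[p,C]$.

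I expect the main obstacle to be entirely on the side of~\cite{Dong25}: the answer-length compression is the delicate part, since naively PCP-encoding the decider blows up the answer alphabet unless one is careful to use a proof system whose local views are of constant size, and one must simultaneously argue that the introsection/oracularization steps used to keep questions short do not reintroduce long answers. Since we are allowed to cite that theorem directly, the only real work here is bookkeeping: matching their constants $s$ and $C$ to the conventions above, and noting (as the footnote in the excerpt already flags) that the distinction between $\omega_q$ and $\omega_q^s$ is immaterial for this statement because the Dong et al.\ protocol can be taken to be synchronous or, via~\cite{Vidick2022}, has $\omega_q$ and $\omega_q^s$ within a constant of each other. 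Hence the proof reduces to: quote Theorem~6.7 of~\cite{Dong25} for the reverse inclusion, and give the one-line semi-decidability argument for the forward inclusion.
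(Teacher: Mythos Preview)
Your proposal is correct and matches the paper's treatment: the paper does not prove Theorem~\ref{thm:RE_MIP_poly_const} at all but simply quotes it from~\cite{Dong25}, so invoking the cited result as a black box (plus the standard semi-decidability argument for the containment in $\RE$) is exactly what is expected here. If anything, you have written more than the paper does.
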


Denote by $\halt$ the language of all Turing machines that halt on the empty input.

\begin{corollary}[{\cite{Dong25}},\cite{CM25}]\label{cor:orac_halt}
    We have $\halt\in\MIP^*_{1,s}[\poly,O(1)]$ such that for every $x\in\halt$, the corresponding game $\calG_x$ admits a perfect oracularizable strategy.
\end{corollary}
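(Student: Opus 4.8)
The plan is to combine Theorem~\ref{thm:RE_MIP_poly_const} with the BCS-game constructions of Section~\ref{sec:BCS} and the observation of Culf--Mastel~\cite{CM25}. Concretely, starting from $x\in\halt$, Theorem~\ref{thm:RE_MIP_poly_const} provides an $\MIP^*$ protocol with polynomial-length questions and constant-length answers whose game $\calG_x$ satisfies $\omega_q(\calG_x)=1$ when $x$ halts and $\omega_q(\calG_x)\leq s$ otherwise. The first step is to recall from \cite{Dong25} (as noted in the introduction) that the YES-instance game $\calG_x$ can in fact be taken to be \emph{synchronous} and to admit a \emph{perfect synchronous oracularizable} strategy --- this is precisely the refinement extracted in \cite{CM25}. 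I would cite this directly rather than reprove it.

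Next I would pass from $\calG_x$ to its projected BCS game. Applying the construction of Definition~\ref{def:theProjectedBCS}, one obtains a BCS $B_x$ and probability measure $\pi^{proj}$ with $\calG(B_x,\pi^{proj})=\calG_x^{proj}$. The point is that the projection operation preserves exactly the structural feature we need: by Lemma~\ref{lem:perf_orac_to_perf_proj}, if $\calG_x$ has a perfect oracularizable strategy then so does $\calG_x^{proj}$, hence so does the associated BCS game $\calG(B_x,\pi^{proj})$. For the NO instances, Lemma~\ref{lem:proj_by_oggame} gives $\omega_q(\calG_x^{proj})\leq\sqrt{(1+\omega_q(\calG_x))/2}\leq\sqrt{(1+s)/2}=:s'<1$, so the soundness gap is preserved (with a square-root loss, which is harmless). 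Since all the maps involved --- the reduction $x\mapsto\calG_x$, and $\calG_x\mapsto(B_x,\pi^{proj})$ --- are computable in polynomial time and the answer lengths of $B_x$ remain $O(1)$ (each context $S_i$ has size $O(1)$, as the original answers had constant length), the resulting family $\{\calG(B_x,\pi^{proj})\}$ witnesses $\halt\in\MIP^*_{1,s'}[\poly,O(1)]$ with the YES-instance games admitting perfect oracularizable strategies, which is the claim.

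One technical wrinkle to address: for the statement to be phrased in terms of $\MIP^*_{1,s}$ one should check that the projected BCS game, when the original has a perfect strategy, has quantum value exactly $1$ --- and indeed a perfect oracularizable strategy certainly has winning probability $1$, so this is immediate. I would also note that the constant $s$ in the corollary statement need not match the $s$ of Theorem~\ref{thm:RE_MIP_poly_const}; it is the value $s'$ obtained after the projection step. The main obstacle --- though it is really a matter of correctly invoking prior work rather than a genuine difficulty --- is the assertion that the \cite{Dong25} reduction yields YES-instance games with perfect \emph{oracularizable} synchronous strategies, rather than merely perfect synchronous ones. This is exactly the content of the Culf--Mastel observation and the reason \cite{CM25} is cited alongside \cite{Dong25}; the honest thing is to state it as a black box with a pointer to those papers, since reproving it would require entering into the internal structure (answer-reduction, the $*$-algebraic formalism) of the $\MIP^*=\RE$ construction, which is well outside the scope of this paper.
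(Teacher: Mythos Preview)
Your first paragraph is exactly the paper's proof: the paper simply writes ``See details in the proof of Lemma 6.4 in \cite{CM25}'', i.e., it treats the oracularizable property of the \cite{Dong25} YES-instance games as a black box extracted by Culf--Mastel. That alone establishes the corollary as stated, since the family $\{\calG_x\}$ from Theorem~\ref{thm:RE_MIP_poly_const} already sits in $\MIP^*_{1,s}[\poly,O(1)]$ and the additional oracularizable property for $x\in\halt$ is precisely what \cite{CM25} supplies.

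Your second and third paragraphs go beyond what is needed. The projection step (passing to $\calG_x^{proj}=\calG(B_x,\pi^{proj})$ via Definition~\ref{def:theProjectedBCS}, Lemma~\ref{lem:perf_orac_to_perf_proj}, and Lemma~\ref{lem:proj_by_oggame}) is not part of this corollary at all; it is carried out later, inside the proof of Theorem~\ref{thm:main_formal}, where the corollary is invoked \emph{before} projecting. Doing the projection here would make the argument redundant and would also replace the soundness constant $s$ by $s'=\sqrt{(1+s)/2}$ for no reason. Nothing you wrote is wrong --- the projected family is a perfectly valid witness for the corollary with a worse constant --- but it conflates two distinct steps of the paper and proves more than the statement asks.
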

\begin{proof}
    See details in the proof of Lemma 6.4 in \cite{CM25}.
    
\end{proof}

\begin{thm}\label{thm:main_formal}
For some constant $s'>0$, and for all small enough $\epsilon>0$, we have $\halt\in\LIN^*_{1-\epsilon,s'}[\poly,O(1)]$.
\end{thm}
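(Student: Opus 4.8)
The plan is to assemble the three ingredients already developed in the paper. First I would invoke Corollary~\ref{cor:orac_halt}: the halting problem reduces, in polynomial time, to a family of synchronous games $\calG_x$ with constant answer size such that $x\in\halt$ implies $\calG_x$ has a perfect oracularizable strategy, while $x\notin\halt$ implies $\omega_q(\calG_x)\le s$ for some fixed $s<1$. Next I would pass to the game $\calG'_x$ of Lemma~\ref{lem:no_empty_correct_answers}, so that every sampled pair of questions admits a satisfying pair of answers (this is needed so that, after projection, all constraints of the associated BCS are nonempty, which in turn is what makes simultaneous folding-over-true and conditioning-upon-a-constraint well defined). By the last sentence of that lemma, if $\calG_x$ had a perfect strategy then $\calG'_x=\calG_x$, so completeness is preserved; soundness degrades only to $s'' := s + \tfrac12(1-s) < 1$. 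One must also check $\calG'_x$ remains synchronous and that a perfect oracularizable strategy for $\calG_x$ still gives one for $\calG'_x$ (the added rounds are trivially won by a commuting-observable/projective strategy), so that Lemma~\ref{lem:perf_orac_to_perf_proj} applies downstream.

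Then I would form the BCS $B_x$ and distribution $\pi^{proj}_x$ of Definition~\ref{def:theProjectedBCS} associated to $\calG'_x$, so that $\calG(B_x,\pi^{proj}_x)=(\calG'_x)^{proj}$, and run the long-code test $L^\epsilon(u,B_x,\pi^{proj}_x)$ of Definition~\ref{drf:maintest} for a suitable repetition parameter $u$. The output LCS game is the linear system underlying this test, whose equations all have exactly three variables; its question and answer sizes are polynomial and constant respectively (constant answer size because Bob answers one bit and Alice three bits, and the bound on $|\calO_i^x|$ from Corollary~\ref{cor:orac_halt} keeps the contexts $S_i$ of bounded size, hence keeps the test's local objects of polynomial size). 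The map $x\mapsto L^\epsilon(u,B_x,\pi^{proj}_x)$ is computable in polynomial time, which gives membership in $\LIN^*_{\cdot,\cdot}[\poly,O(1)]$ once the two value bounds are in place.

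For completeness: if $x\in\halt$, then $\calG'_x=\calG_x$ has a perfect oracularizable strategy, so by Lemma~\ref{lem:perf_orac_to_perf_proj} the projected game $\calG(B_x,\pi^{proj}_x)$ has a perfect oracularizable — in particular perfect synchronous — strategy, and Lemma~\ref{lem:test_completeness} yields $\omega_q^s(L^\epsilon(u,B_x,\pi^{proj}_x))\ge 1-\epsilon$. For soundness: if $x\notin\halt$, then $\omega_q(\calG'_x)\le s''<1$; Lemma~\ref{lem:proj_by_oggame} gives $\omega_q(\calG(B_x,\pi^{proj}_x))=\omega_q((\calG'_x)^{proj})\le\sqrt{(1+s'')/2}=:s_0<1$, and then Theorem~\ref{thm:parallelprojection} (parallel repetition for projection games — note $(\calG'_x)^{proj}$ is a projection game, and $u$-fold repetition of a BCS game of this form is again a BCS game of this form, so the test applies to it) gives $\omega_q(\calG(B_x,\pi^{proj}_x)^{\otimes u})\le(1-C(1-s_0)^c)^{u/2}$, which we drive below any prescribed threshold by choosing $u$ large enough (independent of $x$). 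Taking the contrapositive of Proposition~\ref{prop:soundness} with $\delta$ fixed and $u$ chosen so that $(1-C(1-s_0)^c)^{u/2}<4\epsilon\delta^2$ then yields $\omega_q^s(L^\epsilon(u,B_x,\pi^{proj}_x))<1-\tfrac1{36}(1-\delta)^2=:s'$. Setting $s'$ to this constant completes the argument: $\halt\in\LIN^*_{1-\epsilon,s'}[\poly,O(1)]$.

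The main obstacle is bookkeeping rather than a new idea: one must verify that each reduction step — passing to $\calG'_x$, projecting to a BCS game, and taking $u$-fold parallel repetition — preserves the structural hypotheses needed by the next step (synchronicity, nonemptiness of all constraints, the projection property, and the constant answer-length / polynomial question-length budget), and that all parameters ($u$, $\delta$, and the final $s'$) can be fixed uniformly over all inputs $x$ and all sufficiently small $\epsilon$. A secondary point worth a sentence is the relation between $\omega_q^s$ and $\omega_q$: the soundness bound from Proposition~\ref{prop:soundness} is stated for the synchronous value, which is exactly what appears in the definition of $\LIN^*$, so no conversion via~\cite{Vidick2022} is strictly needed here, but one should remark on why measuring through $\omega_q^s$ is the right choice and costs nothing.
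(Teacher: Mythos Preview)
Your proposal is correct and follows essentially the same route as the paper's proof: invoke Corollary~\ref{cor:orac_halt}, pass through Lemma~\ref{lem:no_empty_correct_answers}, project via Definition~\ref{def:theProjectedBCS}, then combine parallel repetition (Theorem~\ref{thm:parallelprojection}) with the completeness and soundness of the test (Lemma~\ref{lem:test_completeness} and Proposition~\ref{prop:soundness}). One small correction: the repetition parameter $u$ cannot be fixed uniformly over all sufficiently small $\epsilon$, since the requirement $(1-C(1-s_0)^c)^{u/2}<4\epsilon\delta^2$ forces $u\to\infty$ as $\epsilon\to 0$ --- the paper explicitly notes that the question length therefore depends on $\epsilon$ --- but this is harmless because the theorem only asserts membership for each fixed $\epsilon$.
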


\begin{remark}\label{rem:quantum_vs_synch}
In the case of a non-halting Turing machine, Theorem~\ref{thm:main_formal} only bounds the synchronous value of the resulting game by $s$. 
It is not necessarily true that the quantum value of a game is attained by its synchronous quantum value. However, in some classes of games, it is possible to relate the quantum game value to the synchronous value. Two such examples are synchronous games and projection games \cite[Theorem 3.1, Theorem 4.6]{Vidick2022}.\footnote{
We note that there is a small mistake in the proof of Theorem 4.6 in \cite{Vidick2022}. 
Indeed, in the notations of the current proof, it is possible that for a given $x$, the sum $\sum_aB^x_a$ is not smaller than $Id$.
Nevertheless, the proof can be easily corrected by switching the roles of Alice and Bob in the definition of the projection game \cite[Definition 4.5]{Vidick2022}.
That is, in the corrected version, for each $(x,y)\in\mathcal{X}\times\mathcal{Y}$, there is $f_{xy}:\mathcal{B}\rightarrow \calA$ such that $D(a,b|x,y)=0$ if $a\neq f_{xy}(b)$.}
As the output game of the reduction implicit in Theorem~\ref{thm:main_formal} is a projection game, the following result is an immediate corollary:
There exists $1>s>0$, such that for every sufficiently small $\epsilon>0$, it is $\RE$-hard to decide if the quantum game value of a given LCS game $B$ is greater than $1-\epsilon$, or smaller than $s$.
\end{remark}

\begin{proof}[Proof of Theorem~\ref{thm:main_formal}]
    Define $\delta:=1-\frac{1}{\sqrt{2}}$ and fix any $1/72>\epsilon>0$.
    We let $s'=1-\frac{1}{36}(1-\delta)^2=71/72$.
    These choices are designed so that $s'<1-\epsilon$.
    For simplicity, we assume $\epsilon$ is a rational number.

    Let $\Tilde{s}$ be as in Theorem \ref{thm:RE_MIP_poly_const}, and denote by $s:=\Tilde{s}+(1-\Tilde{s})/2$.
    Note that if $p\leq q$, then $p+(1-p)/2\leq q+(1-q)/2$.
    Let $C,c>0$ be the constants in Theorem \ref{thm:parallelprojection}.
    Fix $u\in\bbN$ to be such that $$s'':=\bigg(1-C\bigg(1-\sqrt{\frac{1+s}{2}}\bigg)^c\bigg)^{\frac{u}{2}}<4\epsilon\delta^2\;.$$ 
    Let $S,D$ be the randomized Turing machines promised by Theorem \ref{thm:RE_MIP_poly_const}, and let $k:\bbR_{\geq 0}\rightarrow \bbR_{\geq 0}$ be the randomness complexity of $S$.

    First, we claim that we can replace $S$ and $D$ with $S'$ and $D'$ that run in polynomial time, and induce the game $\calG_x'$, as defined in Lemma~\ref{lem:no_empty_correct_answers}.
    In fact, $S'$ use the sampler $S$, and then use $D$ to check if there are winning answers for the sampled questions. 
    This process adds only polynomial time (depending on $D$) to the computations of $S'$, as the set of answers is constant.
    Once receiving an input $x$ and a tuple $(i',j',a',b')$, according to the form of $i'$ and $j'$, $D'$ will validate the correctness of the answer by evaluating $D$ or by direct calculations.
    This will also take only polynomial time.

    Recall the definition of the projection of a nonlocal game given in Section \ref{sec:parallelrepetition}.
    Define the two randomized Turing machines $S^{proj}$ and $D^{proj}$ as follows.
    Given an input $x\in\{0,1\}^*$, and a string $br\in \{0,1\}^{k(|x|)+1}$, $S^{proj}$ outputs the pair $((i,j),j')$ where $S'(x,r)=(i,j)$ and $j'=i$ if $b=0$, and $j'=j$ otherwise.
    Upon receiving an input $x$ and a tuple of the form $((i,j),j',(a,b),c)\in \calI_x^3\times\calO_i^x\times\calO_j^x\times\calO_{j'}^x$, $D^{proj}$ returns 1 if and only if $D_x^{proj}((i,j),j',(a,b),c)=1$, where $D_x^{proj}$ is the decider in the game $\calG_x^{proj}$.
    Both $S^{proj}$ and $D^{proj}$ can be made so that their running time on input $x$ is $\poly(|x|)$.

    Next, recall the definition of the $u$-fold repetition of a nonlocal game, given in Section \ref{sec:parallelrepetition}.
    We define the two randomized Turing machines $S^{proj,u}$ and $D^{proj,u}$ as follows.
    Upon receiving an input $x\in\{0,1\}^*$, $S^{proj,u}$ runs $S^{proj}$ u times independently.
    Clearly, since $u$ is a constant $S^{proj,u}$ runs in $\poly(|x|)$, and it outputs the tuple $\{(i_l',j_l')\}_{l\in[u]}$ with probability $(\pi_x^{proj})^{\otimes u}(\{(i_l',j_l')\}_{l\in[u]})$.
    The decider $D^{proj,u}$ is defined analogously.

    Finally, let us define the randomized Turing machines $S^{\lin}$ and $D^{\lin}$.
    Let $h$ be a constant that upper bounds the length of the answers in the protocols given in Theorem \ref{thm:RE_MIP_poly_const}. 
    Given any element $z\in\{0,1\}^*$, consider the game $\calG_z$.
    For every $i\in\calI_z$, we view $\calO_i^z$ as a subset of $\bbZ_2^h$ first by adding zeros to every element $a\in\calO_i^z$ so that their length is equal to $h$, and then identify $0$ with $1$ and $1$ with $-1$.
    As in the construction given in Definition \ref{def:theProjectedBCS} of the BCS game that corresponds to $\calG^{proj}_{z}$, we denote by $C_i^z\subset\bbZ_2^h$ the image of $\calO_i^z$ under this identification.
    Similarly, we denote by $C_{ii'}^z$ the constraint that corresponds to $(i,i')\in\supp(\pi_z)$, as defined in Definition \ref{def:theProjectedBCS}.
    We fix some enumeration of $\calF_{[h\cdot u]}$ and of $\calF_{[2h\cdot u]}$.
    Also, for every $\overline{1}\neq C\in\calF_{[2hu]}$ we fix a choice between the functions $g\wedge C$ and $(-g)\wedge C$, denoted by $s_{g,C}$, which will be hardcoded in $S^{\lin}$.
    In addition, we fix a distribution $\mu$ over $\calF_{[2hu]}$ such that $\mu(y)=1$ with probability $1-\epsilon$ independently for every $y\in\{\pm1\}^{[2hu]}$.
    We recall that $\epsilon$ is assumed to be rational, so it is possible to construct such a distribution that uses a constant amount of randomness.
    We denote this distribution by $M$.
    
    Let us first describe the sampler $S^\lin$.
    Upon receiving an input $x\in\{0,1\}^*$, the sampler runs first $S^{proj,u}$ to generate a collection $\{(i'_l,j'_l)\}_{l\in[u]}$, where $i_l'=(i_l,j_l)\in \calI_x^2$ and $j'_l\in\{i_l,j_l\}$ for every $l\in[u]$.
    Recall that according to all of our identifications, the set $C:=\prod_{l\in[u]}C_{i_l'}$ is also considered a function of $\calF_{[2hu]}$.
    We also have our well-defined section $s_{[hu]}$.
    In what follows, $S^\lin$ is required to explicitly compute the set $C$.
    For this purpose, it deploys the decider $D^{proj,u}$.
    
    Using additional random coins, $S^\lin$ samples functions $f\in\calF_{[hu]}$, $g\in \calF_{[2hu]}$, and $\mu\sim M$.
    Next, according to the enumeration of $\calF_{[hu]}$ and $\calF_{[2hu]}$, it computes the encoding of $f_{[hu]}:=s_{[hu]}(f)$, $s_{g,C}$ and $s_{fg\mu,C}$, and chooses an element $v$ in $$\{(\{j_l'\}_{l\in[u]},f_{[hu]}),(\{i_l'\}_{l\in[u]},s_{g,C}),(\{i_l'\}_{l\in[u]},s_{fg\mu,C})\}$$ uniformly at random.
    Finally, it outputs the pair $$(*)\ \ \ \big((\{i_l'\}_{l\in[u]}, \{j_l'\}_{l\in[u]},C,f,g,\mu),v\big)\;.$$
    
    Upon receiving an input $x\in\{0,1\}^*$ and a tuple $(\omega,v,\overline{a},b)$ where $(\omega,v)$ is in the form of $(*)$, $\overline{a}\in\bbZ_2^3$ and $b\in\bbZ_2$, $D^\lin$ do the following.
    It first computes the corresponding bit $b_\omega$ as on the right hand side of the equation in \ref{def:longcodetest}(9), and the index $i$ in $[3]$ that corresponds to $v$.
    Then it outputs 1 if and only if $a_1a_2a_3=b_\omega$ and $a_i=b$.

    Observe that $S^\lin$ and $D^\lin$ are valid samplers of an $\MIP^*$ protocol for the language \halt.
    For every $x\in\{0,1\}^*$, the nonlocal game encoded by $S^\lin$ and $D^\lin$, denoted by $\calG^\lin_x$, is the one that corresponds to the test $L^\epsilon(u,B_x,\pi_x^{proj})$, where $B_x$ is the BCS that corresponds to $\calG_x^{proj}$ (where we round the answer sets to have length $h$).
    Since $u$ and $h$ are constant and the elements of $\calI_x$ are of a polynomial length, the set of questions of $\calG^\lin_x$ is also of a polynomial length.
    The length of the answers is bounded by $3$.

    Suppose that $x\in\halt$.
    Then by corollary \ref{cor:orac_halt}, the game $\calG_x$ is oracularizable.
    In particular $\calG_x=\calG_x'$.
    Thus, by lemma \ref{lem:perf_orac_to_perf_proj}, the game $\calG(B_x,\pi_x^{proj})=\calG^{proj}$ admits a perfect synchronous strategy.
    By Lemma \ref{lem:test_completeness}, we have $\omega_q^s(\calG^\lin_x)\geq1-\epsilon$.

    Now, assume that $x\notin\halt$.
    Thus, by the assumption, we have $\omega_q(\calG_x)\leq \Tilde{s}<1$.
    By lemma \ref{lem:no_empty_correct_answers}, $\omega_q(\calG_x')\leq \Tilde{s}+(1-\Tilde{s})/2=s$.
    For readability, we denote $(\calG_x')^{proj}$ simply by $\calG_x^{proj}$.
    By lemma \ref{lem:proj_by_oggame}, $$\omega_q(\calG_x^{prog})\leq\sqrt{\frac{1+s}{2}}<1.$$
    Therefore, by Theorem \ref{thm:parallelprojection}, we have \begin{align*}
        \omega_q^s((\calG^{proj}_x)^{\otimes u})& \leq\omega_q((\calG^{proj}_x)^{\otimes u})\\
        &\leq \bigg(1-C\bigg(1-\omega_q(\calG_x^{proj})\bigg)^c\bigg)^{\frac{u}{2}}  \\  
        &\leq\bigg(1-C\bigg(1-\sqrt{\frac{1+s}{2}}\bigg)^c\bigg)^{\frac{u}{2}}\\
        & = s''<4\epsilon\delta^2\;.
    \end{align*}

    Thus, by Proposition \ref{prop:soundness}, we must have $\omega_q^s(\calG^\lin_x)\leq 1-\frac{1}{36}(1-\delta)^2=s'$.
    That is, $\halt\in\LIN^*_{1-\epsilon,s'}[\poly,3]$, as required.

    We note that the length of the questions in our protocol depends on $\epsilon$, by virtue of the choice of $u$.
    
\end{proof}

\section{Sketch of the analysis for general quantum strategies}\label{sec:general_q.strategies}

The purpose of this section is to outline how to generalize the soundness result, namely, Proposition \ref{prop:soundness}, to the case of general quantum strategies.
Together with Lemma \ref{lem:test_completeness}, it implies the hardness of approximating the quantum value of LCSs.
As explained in remark \ref{rem:quantum_vs_synch}, one can get the same hardness of approximation by applying Theorem 4.6 in \cite{Vidick2022} on Proposition \ref{prop:soundness}.
The advantage of the analysis we sketch here is that it provides a better soundness parameter (namely, $119/120+\delta$, for every sufficiently small $\delta$, instead of $1-(6^{2c}/C)+\delta$ for some two nonnegative integers $C$ and $c$), but, we also believe that it makes the correctness of this result more apparent.

Denote by $\langle\cdot,\cdot\rangle_F$ the Frobenius inner product and by $\|\cdot\|_F$ the induced norm.
Namely, for pair of matrices of finite dimension $d$, $A$ and $B$, we have $\langle A,B\rangle_F=\Tr(A^\dagger B)$ and $\|A\|_F=\langle A,A\rangle_F.$

Let $Y$ and $X$ be operators acting on finite dimensional Hilbert spaces $\calH_A$ and $\calH_B$, respectively, and let $\ket{\psi}\in\calH_A\otimes\calH_B$ be a state, with Schmidt decomposition $\ket{\psi}=\sum_ip_i\ket{u_i}\ket{v_i}$.
Denote by $\lambda:\calH_A\rightarrow\calH_B$ the map $\ket{u_i}\mapsto p_i\ket{v_i}$.
As observed in Lemma 4.1 in \cite{Slofstra2011}, we have $$\|(Y^T\otimes\bbI-\bbI\otimes X)\ket{\psi}\|=\|\lambda Y-X\lambda\|_F.$$

Combining this with the observation that $\lambda\lambda^\dagger=\rho_B$, the reduced density of $\ket{\psi}$ on system $\calH_B$, gives the following generalization of Lemma \ref{lem:general1}.

\begin{lem}\label{lem:inequality_on_three_pairs_of_operators_copy}
    Let $Y_i$ and $X_i$ be binary observables on $\calH_A$ and $\calH_B$, respectively, for $i=1,2,3$, and let $\ket{\psi},\lambda$ and $\rho_B$ be as above.

    Then,
    \begin{align*}
        |\Tr(\lambda Y_1Y_2Y_3\lambda^\dagger-X_1\sqrt{\rho_B}X_2X_3\sqrt{\rho_B})|\leq (18(1-\bbE_i\bra{\psi}Y_i^T\otimes X_i\ket{\psi}))^{\frac{1}{2}}+\|[\sqrt{\rho_B},X_1]\|_F.
    \end{align*}
\end{lem}

Note that Lemma \ref{lem:inequality_on_three_pairs_of_operators_copy} indeed implies Lemma \ref{lem:general1}:
in the latter case we have $\rho_B=\bbI/d$, and so $\|\sqrt{\rho_B},X_1]\|_F=0$.
Its proof is similar to that of Lemma \ref{lem:general1}, and is left for the interested reader.

Next, we wish to generalize claim \ref{claim:lowerbound} to the case of general strategies.
For this purpose, we need the following simple fact, which we record without a proof.
Given an operator $A$ and a scalar $\gamma$, we denote by $\Pi_\gamma(A)$ the orthogonal projection onto the eigenspace of $A$ that corresponds to $\gamma$.

\begin{lem}\label{lem:simple_lemma_copy}
    Let $A$ be a binary observable on a finite dimensional Hilbert space $\calH$, and $\rho$ be a density matrix on $\calH$.
    Then,
    $$\|[A,\sqrt{\rho}]\|_F^2= 4(1-\sum_a\Tr(\Pi_a(A)\sqrt{\rho}\Pi_a(A)\sqrt{\rho})).$$
\end{lem}

%\begin{lem}{[\cite[Lemma 2.9]{Vidick2022}]}\label{lem:2.9_in_vidick22}
%    Let $\ket{\psi}\in\calH_A\otimes\calH_B$ and $\{A_a\}$ and $\{B_a\}$ be measurments on $\calH_A$ and $\calH_B$, respectively. 
%    Let $\rho_A$ and $\rho_B$ be the reduced density of $\ket{\psi}$ on $\calH_A$ and $\calH_B$, respectively.
%    Let $\ket{\psi_A}\in\calH_A\otimes\calH_A$ and $\ket{\psi_B}\in\calH_B\otimes\calH_B$ be the canonical purifications of $\rho_A$ and $\rho_B$, respectively.
%    Then,
 %   $$\sum_a\bra{\psi}A_a\otimes B_a\ket{\psi}\leq \bigg(\sum_a\bra{\psi_A}A_a\otimes A^T_a\ket{\psi_A}\bigg)^{1/2}\bigg(\sum_a\bra{\psi_B}B_a\otimes B^T_a\ket{\psi_B}\bigg)^{1/2}.$$
%\end{lem}

\begin{lem}\label{lem:main_tech_copy}
    Let L be an instance of 3LIN, with $n$ equations and $m$ variables, and let $S=A,B,\ket{\psi}$ be a given quantum strategy.
    Denote by $V_i=\{j_1^i,j_2^i,j_3^i\}$ the set of variables that appear in the i'th equation, and let $\lambda$ be as in Lemma \ref{lem:inequality_on_three_pairs_of_operators_copy}.
    Then,
\begin{align*}
    (**)\ &|\bbE_{i}\Tr(\lambda (A^i_{j_3^i}A^i_{j^i_2}A^i_{j^i_1})^T\lambda^\dagger)-\bbE_i\Tr(B_{j^i_1}\sqrt{\rho_B}B_{j^i_2}B_{j^i_3}\sqrt{\rho_B})|\\
    &\leq(18(1-\beta(L,S)))^{\frac{1}{2}}+(12(1-\beta(L,S)))^{\frac{1}{2}}\\
    &\leq (60(1-\beta(L,S)))^{\frac{1}{2}}
\end{align*}
\end{lem}
\begin{proof}[Sketch of proof]
     By Lemma \ref{lem:inequality_on_three_pairs_of_operators_copy}, we have
    \begin{align*}
        (**)&\leq \bbE_i(18(1-\bbE_{j\in V_i}\bra{\psi}A^i_j\otimes B_j\ket{\psi}))^{\frac{1}{2}}+\bbE_i\|[\sqrt{\rho_B},B_{j_i}]\|_F\\
        &\leq(18(1-\bbE_i\bbE_{j\in V_i}\bra{\psi}A^i_j\otimes B_j\ket{\psi})^{\frac{1}{2}}+\bbE_i\|[\sqrt{\rho_B},B_{j_i}]\|_F
    \end{align*}
    where $j_i\in V_i$, for every $i\in[n]$.

    Now, as we assume that Alice always responds with a satisfying assignment, we have the following equalities:
    \begin{align*}
        \omega(L,S)&=\bbE_i\bbE_{j\in V_i}\sum_a\bra{\psi}\Pi_a(A^i_j)\otimes \Pi_a(B_j)\ket{\psi}\\
        &=\bbE_{j\sim \nu}\bbE_{i\sim\nu_j}\sum_a\bra{\psi}\Pi_a(A^i_j)\otimes \Pi_a(B_j)\ket{\psi}\\
        &=\bbE_{j\sim \nu}\sum_a\bra{\psi}\bbE_{i\sim\nu_j}\Pi_a(A^i_j)\otimes \Pi_a(B_j)\ket{\psi},
    \end{align*}
    for some collection of distributions $\nu$ and $\nu_j$ for $j\in[m]$.
    
    Define the POVM with two outcomes $A_j$ by setting $A^a_j=\bbE_{i\sim\nu_j}\Pi_a(A^i_j)$.
    Now, according to Lemma 2.9 in \cite{Vidick2022}, we have  
    \begin{align*}
        \omega(L,S)&\leq\bigg(\bbE_{j\sim\nu}\sum_a\Tr(A^a_j\sqrt{\rho_A}A^a_j\sqrt{\rho_A})\bigg)^{\frac{1}{2}}\bigg(\bbE_{j\sim\nu}\sum_a\Tr(\Pi_a(B_j)\sqrt{\rho_B}\Pi_a(B_j)\sqrt{\rho_B})\bigg)^{\frac{1}{2}}
    \end{align*}

    In particular, as the left factor is smaller than 1, we have
    $$\bbE_{j\sim\nu}\sum_a\Tr(\Pi_a(B_j)\sqrt{\rho_B}\Pi_a(B_j)\sqrt{\rho_B})\geq(\omega(L,S))^2\geq1-2(1-\omega(L,S))=\beta(L,S).$$

    Therefore, using Lemma \ref{lem:simple_lemma_copy}, we have
    \begin{align*}
        \bbE_i\|[\sqrt{\rho_B},B_{j_i}]\|_F&\leq (\bbE_i\|[\sqrt{\rho_B},B_{j_i}]\|^2_F)^{\frac{1}{2}}\leq (3\bbE_i\bbE_{j\in V_i}\|[\sqrt{\rho_B},B_j]\|^2_F)^{\frac{1}{2}} \\
        &=(3\bbE_{j\sim\nu}\|[\sqrt{\rho_B},B_j]\|^2_F)^{\frac{1}{2}}\\
        &=\bigg(12\big(1-\bbE_{j\sim\nu}\sum_a\Tr(\Pi_a(B_j)\sqrt{\rho_B}\Pi_a(B_j)\sqrt{\rho_B})\big)\bigg)^{\frac{1}{2}}\\
        &\leq \bigg(12\big(1-\beta(L,S)\big)\bigg)^{\frac{1}{2}}
    \end{align*}
\end{proof}

\begin{prop}[Soundness of the long-code test]\label{prop:soundness_copy}
    Fix some $\epsilon,\delta>0$ and suppose that $\omega_q(L^{\epsilon}(u,B,\pi))\geq 1-\frac{1}{120}(1-\delta)^2$.
    Then, $\omega_q(\calG(B,\pi)^{\otimes u})\geq 4\epsilon\delta^2$. 
\end{prop}

\begin{proof}[Sketch of proof]
    Let $S=\{A,B,\ket{\psi}\}$ be a quantum strategy for the test $L:=L^{\epsilon}(u,B,\pi)$ such that $\omega(L,S)\geq1-\frac{1}{120}(1-\delta)^2$.
    We use the same notations as in Proposition \ref{prop:soundness}.
    Denote by $\lambda$ the map as in Lemma \ref{lem:inequality_on_three_pairs_of_operators_copy}, $\rho_A$ and $\rho_B$ the reduced densities of $\ket{\psi}$, and by $\ket{\psi_A}$ and $\ket{\psi_B}$ the canonical purifications of $\rho_A$ and $\rho_B$, respectively.

    Lemma \ref{lem:main_tech_copy} implies the following generalization of claim \ref{claim:lowerbound}.
    \begin{claim}\label{claim:lowerbound_copy}
        The following holds:
        \begin{align}\label{equ:lowerbound_copy}
            \Big|1-\underset{W,U,C}{\bbE}\big[\underset{f,g,g'}{\bbE}[\Tr(B^U_f\sqrt{\rho_B}B^{W,C}_gB^{W,C}_{g'}\sqrt{\rho_B})]\big]\Big|\leq 1-\delta \;.
        \end{align}
    \end{claim}

As in \ref{prop:soundness}, line \ref{equ:lowerbound_copy} can be further developed to the following inequality: 
 \begin{align*}
        \big|\underset{W,U,C}{\bbE}[\sum_{\beta\subset\{\pm1\}^W}\Tr(\hat{B}^U_{\pi_2(\beta)}\sqrt{\rho_B}(\hat{B}^{W,C}_{\beta})^2\sqrt{\rho_B})(1-2\epsilon)^{|\beta|}]\big|\geq \delta. \label{equ:soundmainlowerbound}
    \end{align*}

Next we describe the strategy for the u-fold parallel game $\calG(B,\pi)^{\otimes u}$.
In fact, the players will act similarly as the players in the synchronous case, using the measurements $((\hat{B}^U_\alpha)^T)^2$ and $(\hat{B}^{W,C}_\beta)^2$, but each player will hold a copy of $\calH_B$, and they will share the entangled state $\ket{\psi_B}$.
    
One can check that the winning probability of the above strategy is at least
\begin{align*}
    &\sum_{\beta}\bra{\psi_B}(\hat{B}^{W,C}_{\beta})^2\otimes((\hat{B}_{\pi_2(\beta)}^U)^2)^T\ket{\psi_B}|\beta|^{-1}=\\
    &\sum_{\beta}\Tr((\hat{B}_{\pi_2(\beta)}^U)^2\sqrt{\rho_B}(\hat{B}^{W,C}_{\beta})^2\sqrt{\rho_B})|\beta|^{-1}\;.
\end{align*}
Using the Cauchy-Schwarz inequality, we have
\begin{align*}
    & |\sum_{\beta}(1-2\epsilon)^{|\beta|}\Tr(\hat{B}_{\pi_2(\beta)}^U\sqrt{\rho_B}(\hat{B}^{W,C}_{\beta})^2\sqrt{\rho_B})|=|\sum_{\beta}\langle \sqrt{\rho_B}\hat{B}^{W,C}_{\beta}, \hat{B}_{\pi_2(\beta)}^U\sqrt{\rho_B}\hat{B}^{W,C}_{\beta}(1-2\epsilon)^{|\beta|}\rangle_F| \\
    &\leq (\sum_{\beta}\|\sqrt{\rho_B}\hat{B}^{W,C}_{\beta}\|_F^2)^{\frac{1}{2}}(\sum_{\beta}\|\hat{B}_{\pi_2(\beta)}^U\sqrt{\rho_B}\hat{B}^{W,C}_{\beta}(1-2\epsilon)^{|\beta|}\|_F^2)^{\frac{1}{2}} \\
    &\leq( \sum_{\beta}(1-2\epsilon)^{2|\beta|}\Tr((\hat{B}_{\pi_2(\beta)}^U\sqrt{\rho_B}\hat{B}^{W,C}_{\beta})^{\dagger}(\hat{B}_{\pi_2(\beta)}^U\sqrt{\rho_B}\hat{B}^{W,C}_{\beta})))^{\frac{1}{2}}\\
    &=(\sum_{\beta}\Tr((\hat{B}_{\pi_2(\beta)}^U)^2\sqrt{\rho_B}(\hat{B}^{W,C}_{\beta})^2\sqrt{\rho_B})(1-2\epsilon)^{2|\beta|})^{\frac{1}{2}}\;,
\end{align*}
where we used the fact that $\sum_{\beta}\|\sqrt{\rho_B}\hat{B}^{W,C}_{\beta}\|_F^2=\Tr(\rho_B\sum_{\beta}(\hat{B}^{W,C}_{\beta})^2)=\Tr(\rho_B)=1$.
The rest of the proof is identical to that of Proposition \ref{prop:soundness}.

\end{proof}

\printbibliography[title={References}]

\end{document}